\documentclass{article}
\usepackage[utf8]{inputenc}

\pdfoutput=1
\usepackage{import}
\usepackage{amssymb}
\usepackage{amsmath}
\usepackage{amsthm}
\usepackage{xspace}
\usepackage{comment}
\usepackage{natbib}
\usepackage{url}
\usepackage{enumitem}
\usepackage{hyperref}
\usepackage{lmodern}
\usepackage{newtxtext}
\usepackage{xfrac} 
\usepackage[capitalise]{cleveref}
\usepackage{tikz}
\usetikzlibrary{positioning}

\usepackage{algorithm}
\usepackage{algorithmic}

\usepackage{xcolor}

\usetikzlibrary{shapes,arrows,positioning,fit,arrows.meta}

\newtheorem{theorem}{Theorem}[section]

\newtheorem{corollary}[theorem]{Corollary}
\newtheorem{lemma}[theorem]{Lemma}

\newtheorem{claim}[theorem]{Claim}
\newtheorem{definition}[theorem]{Definition}
\ifdefined\corollary\else
\newtheorem{corollary}[theorem]{Corollary}
\fi

\usepackage{enumitem}

\newtheorem{openq}{Open Question }
\theoremstyle{remark}

\usepackage{tcolorbox}
\renewenvironment{openq}{\refstepcounter{openq}\begin{tcolorbox}[colback=gray!7, colframe=black, boxsep=1pt, left=4pt, right=4pt, rounded corners]\textbf{Open Question \theopenq:}}{\end{tcolorbox}}

\newcommand{\findrepr}{\textsc{ConstructChain}\xspace}

\newcommand{\status}{\textsc{status}}

\newcommand{\NestedBased}{\textsc{NestedBasedRepresentation}\xspace}

\newcommand{\MGC}{\textsc{ModifiedGreedyCapture}}

\newcommand{\FR}{\textsc{FindRepresentative}}

\newcommand{\ChainBased}{\textsc{ChainBasedRepresentation}\xspace}

\newcommand{\union}{\cup}

\DeclareMathOperator*{\argmin}{arg\,min}

\def\min{\qopname\relax n{min}}
\def\max{\qopname\relax n{max}}
\def\argmin{\qopname\relax n{argmin}}

\def\Pr{\qopname\relax n{\mathbf{Pr}}}

\newcommand{\Rr}{\mathcal{R}}

\def\A{\mathcal{A}}

\def\G{\mathcal{G}}
\def\H{\mathcal{H}}

\def\P{\mathcal{P}}

\newcommand{\eat}[1]{}

\newenvironment{lp*}{\begin{equation*}  \begin{array}{lll}}{\end{array}\end{equation*}}

\usepackage[margin=1in]{geometry}
\title{\textbf{Temporal Panel Selection in Ongoing Citizens’ Assemblies}}

\author{ 
\textbf{Yusuf Hakan Kalaycı}  
 and    \textbf{Evi Micha}\\
  University of Southern California\\
  \texttt{\{kalayci,evi.micha\}@usc.edu}, 
}
\date{}
\begin{document}

\maketitle

Permanent citizens’ assemblies are ongoing deliberative bodies composed of randomly selected citizens, organized into panels that rotate over time. Unlike one-off panels, which  represent the population in a single snapshot, permanent assemblies enable shifting participation across multiple rounds. This structure offers a powerful framework for ensuring that different groups of individuals are represented over time across successive panels.  In particular, it allows smaller  groups of individuals that may not warrant representation in every individual panel to be  represented across a sequence of them. We formalize this temporal sortition framework by requiring proportional representation both within each individual panel and across the sequence of panels. 

Building on the work of Ebadian and Micha (2025), we consider a setting in which the population lies in a metric space, and the goal is to achieve both proportional representation, ensuring that every   group of citizens receives adequate representation, and individual fairness, ensuring that each individual has an equal probability of being selected. We extend the notion of representation to a temporal setting by requiring that every initial segment of the panel sequence, viewed as a cumulative whole, proportionally reflects the structure of the population. We present algorithms that provide varying guarantees of proportional representation, both within individual panels and across any sequence of panels, while also maintaining individual fairness over time.


\section{Introduction}
In recent years, \emph{citizens’ assemblies} have emerged as a promising model for community-driven governance~\cite{stone2011luck, vanreybrouck2016against}. The core idea is to randomly select a panel from the population, provide them with balanced information on an issue, allow time for deliberation, and then collect their recommendations, which are taken into account by the relevant authorities. This approach has been adopted in a variety of local contexts, with many city governments and local authorities implementing citizens' assemblies to address policy issues, as well as  in several prominent recent examples at the national level—such as in Ireland~\footnote{\url{https://www.citizensinformation.ie/en/government-in-ireland/irish-constitution-1/citizens-assembly/}}— and at the continental level through the European Union\footnote{\url{https://citizens.ec.europa.eu/european-citizens-panels_en}}.
At the heart of citizens' assemblies is \emph{sortition}: a randomized selection process creating a panel that reflects the general population's diverse perspectives, not just technical expertise. To support this, the process aims for representativeness across key features (e.g., geography, education level) while ensuring every individual has a meaningful selection chance. In practice, organizers often implement stratified sampling with quotas on individual features—for instance, reserving a fixed percentage of seats for a particular age group or requiring that at least a certain proportion of representatives have a specific occupation~\cite{FGGH+21}. However, this method fails to account for more complex combinations of features—for example, individuals who are both from a specific region and work in a particular occupation. On the other hand, forcing  representation across all such combinations is typically infeasible; with six features each taking just three values, there are already over 700 disjoint groups, far more than a typical panel in practice.

As a middle ground,  recent works~\cite{SKMPS22, Ebadian2025} propose leveraging an underlying representation metric space for defining when a panel represents an underlying population in a more rigorous way. This space measures how well an individual represents another individual or group, with smaller distances indicating stronger representation; this space can be constructed based on features relevant to the application. Given such a metric space, \citet{Ebadian2025} adapt a notion of proportional representation previously used in multi-winner elections~\cite{lackner2022approval} and clustering~\cite{aziz2017justified, chen2019proportionally, micha2020proportionally} to define when a panel is proportionally representative of an underlying population. At a high level, the notion requires that, given a population of size \( n \) and a panel of size \( k \), any subset of the population of size at least \( q \cdot \sfrac{n}{k} \) is entitled to up to \( q \) representatives.

As citizens’ assemblies gain increasing acceptance, the next step in this progression is the growing establishment of \emph{permanent citizens’ assemblies}, an institutional innovation that enables ongoing participation rather than one-off deliberation. Unlike traditional, ad-hoc citizens’ assemblies that are convened for specific issues and disbanded afterward, these permanent bodies provide structured, recurring opportunities for citizens to participate directly in policymaking over extended periods. While the institution itself is continuous, its members are periodically rotated (typically every six months or annually) through sortition. The Ostbelgien Model in the German-speaking community of Belgium is one of the most prominent examples~\cite{niessen2019designing}.
Established in 2019, it includes a standing Citizens' Council that   provides citizens a permanent voice in the process of decision making  by rotating Citizens' Panels.  European Union has also  recently experimented with a similar format~\cite{alemanno2022permanent}. This new innovation 
 motivates the main  question in this work:
\begin{quote}
\emph{In ongoing citizens’ assemblies, can we ensure that smaller groups receive representation over time, even if they do not qualify in every individual panel, while still guaranteeing that every sufficiently large group is proportionally represented in each panel?}
\end{quote}

\paragraph{Temporal Sortition.}
Permanent citizens' assemblies create new opportunities to achieve broader representation across groups of individuals who share similar features. Smaller groups that may not qualify for representation on every individual panel can now be  represented across a sequence of panels.
To illustrate this, consider a toy example in which the population consists of four disjoint groups: the first makes up half of the population, the second a quarter, and the remaining two one-eighth each. Suppose we are selecting a panel of size 4. Proportionality would assign 2 seats to the first group, 1 to the second, and the last to either the third or fourth group, but not both.
However, if we construct two consecutive panels, we can alternate this final seat—assigning it to the third group in the first panel and to the fourth in the second. In this way, both smaller groups receive representation over time, even if not in every individual panel.

This simple example demonstrates how permanent citizens' assemblies can extend proportional representation beyond the limits of any single panel. To capture this formally, we introduce the \emph{temporal sortition framework}, which models how representation can be extended across a sequence of panels.
At a high level, the goal is to construct a sequence of randomly selected panels such that each panel proportionally represents every group that is large enough within that panel's population; groups that are not eligible in a single panel still receive representation once enough seats accumulate across the sequence; and each individual has an equal probability of being selected throughout the process.
Formally, given a population $V$ of size $n$ in a representation metric space, our objective is to construct a panel sequence \( P_1, P_2, \ldots, P_{\ell} \), where panel \( P_i \) has size \( k_i \). Leveraging notions of proportional representation similar to those utilized by~\citet{Ebadian2025} for static panels, and assuming the existence of such a metric space, our goal is to define a distribution over these \( \ell \) consecutive panels that satisfies the following properties:

\begin{enumerate}
    \item \textit{Representation per Panel:} Every subset of the population of size at least \( \sfrac{n}{k_i} \) (or \( q \cdot \sfrac{n}{k_i} \)) receives at least 1 (or \( q \)) representatives in panel \( P_i \).
    
    \item \textit{Representation Over Time:} 
    For every \( t \in [\ell] \), every subset of the population of size at least \( \sfrac{n}{\sum_{i=1}^t k_i} \) (or \( q \cdot \sfrac{n}{\sum_{i=1}^t k_i} \)) receives at least 1 (or \( q \)) representatives among the first \( t \) panels, i.e., in \( \cup_{i=1}^t P_i \).
    
    \item \textit{Individual Fairness:} Each individual is included in the union of all panels \( \cup_{i=1}^{\ell} P_i \) with equal probability, equal to \( \sfrac{\sum_{i=1}^{\ell} k_i}{n} \).

\end{enumerate}

 To define representation rigorously in the presence of a metric space, we draw on two established notions from the literature: \(\alpha\)\nobreakdash-Proportionally Fair Clustering (\(\alpha\)\nobreakdash-PFC) by~\citet{chen2019proportionally}, and \(\beta\)\nobreakdash-Proportionally Representative Fairness (\(\beta\)\nobreakdash-PRF)  by~\citet{aziz-lee-chu-vollen-proportional-clustering}. Roughly speaking, \(\alpha\)\nobreakdash-PFC guarantees that any group of at least \( \sfrac{n}{k} \) individuals in the metric space has 
at least  one representative whose distance from  some member of the group is  at most  $\alpha$
 times the group’s diameter.
 In contrast, \(\beta\)\nobreakdash-PRF ensures that any group of size at least \( q \cdot \sfrac{n}{k} \) has at least \(q\) representatives 
  whose distance from  some member of the group is  at most  $\beta$
 times the group’s diameter.

\paragraph{Technical Challenge and Our Contributions.}

In~\Cref{sec:federated-sortition}, as a warm-up, we consider the simpler setting where the objective is to ensure proportional representation for each individual panel as well as for the overall collection of panels, which we refer to as the \emph{global panel} \( \bigcup_{i \in [\ell]} P_i \). We begin with the case of creating \(\ell\) panels of size \(k\). A natural approach is to proceed in two steps. First, we select \(\ell \cdot k\) individuals from the population in a way that provides a constant-factor approximation to PRF, using known algorithms from the literature. Next, we partition this set of \(\ell \cdot k\) individuals into \(\ell\) panels of size \(k\), again applying known algorithms   so that each panel fairly represents this smaller group. Together, these two steps ensure that both the individual panels and the global panel satisfy a constant-factor approximation to PRF.  More generally, we  investigate what happens  when a smaller panel is constructed from a larger one, i.e., we first form a large panel \(P_1\) of size \(k_1\) that satisfies \(\alpha\)-PRF with respect to the whole population, and then extract a smaller panel \(P_2\)  of size \(k_2\) from $P_1$ that satisfies \(\alpha\)-PRF with respect to  \(P_1\).  We show that  if \(k_1\) is a multiple of \(k_2\), then \(P_2\) also represents the original population within a \((2\alpha\beta + \beta)\)-approximation to PRF. Surprisingly, however, when \(k_1\) is \emph{not} divisible by \(k_2\), there exist instances where the smaller panel \(P_2\) fails to satisfy \emph{any} meaningful approximation of proportional representation for the population. This highlights that the problem is technically far more subtle than it might appear at first glance.

In~\Cref{sec:prefix-and-panel}, we strengthen the requirements to demand representation for each panel, the global panel, and every prefix \( P_{\leq t} = \bigcup_{i \in [t]} P_i \). This significantly increases the difficulty of the problem, since the goal is to construct a representative global panel, decompose it into representative individual panels, and order them so that every prefix $P_{\leq t}$ is also representative, as if the task were to build a panel of size equal to the total size of that prefix, all while also ensuring individual fairness. Though this appears challenging, we design an algorithm for $\ell$ panels of size $k$ that guarantees individual fairness and a PFC approximation, with respect to both each individual panel and each prefix $P_{\leq t}$, that scales exponentially with the number of panels. The main property that the algorithm exploits is that under proportionality, any subset of individuals that deserves representation in smaller panels should also be represented in larger panels. The algorithm thus constructs groups for each prefix size, enforcing that groups for larger panels are nested within those for smaller panels. The algorithm then carefully leverages this nested structure to guarantee proportional representation for each panel and for every prefix. This hierarchical approach, however, comes at the cost of an exponential blow-up in the approximation factor with respect to the number of panels  $\ell$.

Finally, in~\Cref{sec:prefix} we consider a relaxed but still highly demanding requirement. Instead of enforcing proportional representation for each individual panel we require it only for every prefix \( P_{\leq t} = \bigcup_{i \in [t]} P_i \) with \( t \leq \ell \). While this abandons per-panel guarantees, it still captures a strong notion of representation by ensuring that cumulative representation is preserved over time. We show that this relaxation makes the problem  more tractable and we present an algorithm that achieves a constant-factor approximation of PFC with respect to every prefix, while still guaranteeing that each individual is included in the global panel with equal probability. The algorithm is particularly intriguing and the key idea  is to replace the  hierarchical structure with a more flexible construction that carefully links together the groups that should be represented at different panel sizes, that is, across prefixes. This design ensures that representation achieved for earlier prefixes automatically extends to later ones, preventing error from accumulating and ultimately yielding a constant-factor bound.

Taken together, these results show that temporal sortition algorithms can achieve both per-panel and cumulative (over-time) representation guarantees, along with individual fairness. At the same time, our findings uncover subtle structural challenges that make the problem far from straightforward. Throughout the paper we highlight several intriguing open questions that we hope will inspire further research.

\paragraph{Related Work.}

The design of citizens’ assemblies that are representative of the broader population has received significant attention in the computer science literature~\cite{FGGH+21, FGGP20,flanigan2024manipulation,baharav2024fair, Ebadian2025,SKMPS22, CaragiannisMichaPeters2024}. However, to the best of our knowledge, no prior work has addressed the problem of designing a sequence of panels that are representative of the population both individually and collectively over time. The only conceptually related work is the recent proposal by \citet{halpern2025federated}, who introduce federated assemblies: a hierarchical model in which assemblies are connected through a directed acyclic graph, and members of higher-level assemblies are drawn from lower-level ones. Their algorithms ensure individual fairness (each person has an equal probability of selection), ex ante fairness (each subassembly is expected to receive representation proportional to its size), and ex post fairness (the realized allocation closely approximates the expected proportions). In contrast, our work focuses on a sequence of stand-alone panels that, taken together over time, provide comprehensive representation, shifting the focus from structural to longitudinal representational equity.

Metric proportional representation has emerged as a central theme across computational social choice, clustering, and data summarization. Early clustering research introduced proportionally fair clustering \citep{chen2019proportionally} and individual fairness \citep{jung-individual-fairness} establishing the idea of ensuring representation for sufficiently large point sets. These concepts subsequently influenced committee selection, spawning frameworks for proportionally representative committees \citep{kalayci2024proportional} and proportionally representative fairness \citep{aziz-lee-chu-vollen-proportional-clustering}.

In sortition, \citet{SKMPS22} first initiated the idea of utilizing a  metric space for achieving representation, but they focus on selecting a panel that maximizes the social welfare. \citet{Ebadian2025} later introduced the idea of the  proportional representation over a metric space and designed the Fair Greedy Capture algorithm that maintains individual fairness while ensuring constant-factor approximation to the ex post core. These single-shot formulations ($\ell = 1$) serve as foundation for our temporal framework extending to sequential multi-panel settings. We refer readers to \citep{kellerhals:peters:proportional-fairness} for the landscape of metric proportional representation and their relationships.

Fairness across time horizons has gained momentum in perpetual voting \citep{Lackner:Perpetual-Voting, Lackner:Martin:Maly:Proportional-Decision-Perpetual, Bulteau:Hazon:Page:Rosenfeld:Talmon:JR-for-perpeutal}, temporal committee selection \citep{elkind:obraztsova:teh:temporal-fairness-multiwinner, Elkind:Obraztsova:Peters:Teh:verifying-proportionality-temporal-voting, do:hervouin:lang:skowron:online-voting}, temporal clustering \citep{dey:rossi:Sidiropoulos:temporal-clustering, dey:rossi:sidiropoulos:temporal-hierarchical-clustering}, repeated matching \citep{gollapudi-matching, caragiannis-matching, trabelsi-matching}, and sequential allocation \citep{bampis:fair-allocation, igarashi:fair-allocation}. Motivated by these works and the need for time-robust citizens' assemblies, we extend single-shot metric sortition to multi-step settings where groups underrepresented in individual panels achieve fair representation across time.

\section{Preliminaries}
\label{sec:preliminaries}

For \( m \in \mathbb{N} \), let \( [m] = \{1, \ldots, m\} \). We denote the population by \(V= [n] \).  The individuals are embedded in an underlying \emph{representation metric space} equipped with a distance function \( d \), where the distance between any two individuals \( i \) and \( j \) is denoted by \( d(i, j) \). We assume that \( d \) is a pseudo-metric 
\footnote{A \emph{metric} is a non-negative function $d$ on pairs satisfying: (i) $d(x,x) = 0$ for all $x$, (ii) \emph{symmetry}: $d(x,y) = d(y,x)$ for all $x, y$, (iii) \emph{triangle inequality}: $d(x,z) \leq d(x,y) + d(y,z)$ for all $x, y, z$, and (iv) \emph{positivity}: $d(x,y) > 0$ whenever $x \neq y$. A \emph{pseudo-metric} relaxes the positivity requirement, allowing distinct points to have zero distance.}. 
An instance of our problem is fully specified by the set of individuals and their pairwise distances. For simplicity, we use $d$ to refer to both the instance and the underlying distance function.
For any individual $v \in V$ and radius $r \geq 0$, we define the \emph{ball} $B(v,r) = \{u \in V \mid d(v,u) \leq r\}$, i.e.  the set of all individuals within distance 
$r$ in the metric space from a given individual 
$v$.

A \emph{temporal selection algorithm}  $\A$ takes as input the population, the metric, the number of panels $\ell$ and panel sizes $k_1, k_2 \dots k_\ell$ and outputs a probability distribution  over a sequence of $\ell$ disjoint panels $P_1, \ldots P_\ell$ where $P_i$ has size equal to $k_i$.  A sequence of panels, $P_1, \ldots P_\ell$, is in the support of $\A$, if the algorithm returns this sequence with positive probability. Throughout this work, we assume that $\sum_{t=1}^{\ell} k_t \leq n$, so that the total number of panel seats does not    exceed the population size. This is the most natural setting in practice, as citizens' assemblies typically involve a few hundred members per panel, and even over decades of operation, the cumulative number of participants remains orders of magnitude smaller than the relevant population.

\paragraph{Representation Axioms.}
We start by defining two notions of proportional representation that have been proposed in the literature for individual panels. 


    \begin{definition}[$\alpha$-Proportionally Fair Clustering ($\alpha$-PFC)\cite{chen2019proportionally}]
    \label{def:pfc}
    A panel $P \subseteq V$ of size $k$ is $\alpha$-proportionally fair for $\alpha \geq 1$ if, for any subset $S \subseteq V$ of size at least $\sfrac{n}{k}$, there exists an individual $v \in S$ and a panel member $p \in P$ such that
    $$d(v, p) \leq \alpha \cdot \min_{y \in V} \max_{u \in S} d(u,y).$$ 
    \end{definition}
In words, proportionally fair clustering ensures that no group of individuals of size at least $\sfrac{n}{k}$ can identify an alternative representative not in the current panel that all its members would strictly prefer over their closest representative in the selected panel.

    \begin{definition}[$\beta$-Proportionally Representative Fairness ($\beta$-PRF)~\cite{aziz-lee-chu-vollen-proportional-clustering}]
    \label{def:prf}
    A panel $P \subseteq V$ of size $k$ is $\beta$-proportionally representative  if for any set of individuals $S \subseteq V$ of size at least $q \cdot \sfrac{n}{k}$ where the maximum pairwise distance within $S$ is $r$, we have:
    $$ \left| P \cap \bigcup_{v \in S} B(v, \beta \cdot r) \right| \geq q. $$
    \end{definition}
In words, proportionally representative fairness ensures that for every group of individuals of size at least $q \cdot \sfrac{n}{k}$ with maximum pairwise distance $r$, there exist at least $q$ representatives in the panel, each within distance $r$ of some individual in the group.

    Intuitively, proportionally fair clustering ensures that coalitions of size at least $\sfrac{n}{k}$ receive representation by at least one panel member, while proportionally representative fairness requires that larger coalitions (of size at least $q \cdot \sfrac{n}{k}$) receive proportionally many representatives (specifically, $q$ representatives). Recent work has established theoretical relationships between these concepts: \citet{aziz-lee-chu-vollen-proportional-clustering} and  \citet{kellerhals:peters:proportional-fairness} showed that $1$-PRF implies $(1+\sqrt{2})$-PFC, and 
    this can be generalized to demonstrate that $\beta$-PRF implies $(1+\sqrt{2}) \cdot \beta$-PFC. 

\paragraph{Axioms for Temporal Sortition.} 
We evaluate the fairness and representation guarantees of a distribution  across a series of panels $  {P_1, \dots, P_\ell}$ using the following axioms 
:
    
\begin{enumerate}
  \item \textbf{Individual Fairness:} 
  Each individual should have equal selection probability across the entire process. We say that a selection algorithm satisfies \emph{individual fairness} if for all $v \in V$,
  $\Pr_{P_1,\ldots, P_{\ell} \sim \A}\left[v \in P\right] = \sfrac{\sum_{i=1}^\ell k_i}{n},$
  where $P = \bigcup_{i=1}^\ell P_i$. 

  \item \textbf{Individual Panel Representation:} 
 For any representation axiom $\Pi$, we say that a selection algorithm  $\A$ satisfies $\Pi$ at the ``panel level'' if for every $\P=\{P_1, \dots, P_\ell\}$ in the support of $\A$ every $P_i$ satisfies $\Pi$ for population $V$ and panel size $k_i$, simultaneously. In this work, we focus on $\alpha$-PFC and $\beta$-PRF definitions.
  
  \item \textbf{Global Panel Representation:} The union of all panels should satisfy  representation axioms, providing long-term representation guarantees. Given a representation axiom $\Pi$, we say that a selection algorithm $\A$  satisfies \emph{$\Pi$ at the global level}  if  for every $\P=\{P_1, \dots, P_\ell\}$ in the support of $\A$, $\bigcup_{i=1}^\ell P_i$ satisfies axiom $\Pi$ for population $V$ and panel size $\sum_{i=1}^\ell k_i$.

  \item \textbf{Prefix Representation:} For more stringent requirements beyond global panel representation, we  demand that for every  $t \in [\ell]$, the cumulative selection panel  of the first $t$ panels  maintains proportional representation, ensuring representation quality at every stage.  Given a proportionality axiom $\Pi$, we say that that a selection algorithm  $\A$ satisfies $\Pi$ at the ``prefix level''  if  for every $\P=\{P_1, \dots, P_\ell\}$ in the support of $\A$,    the cumulative panel $P_{\leq t}$ for each $t \in [\ell]$ satisfies representation axiom $\Pi$ for population $V$ with panel size $\sum_{i=1}^t k_i$.
  
\end{enumerate}

\paragraph{Paper Organization.}
In~\Cref{sec:federated-sortition}, we consider the warm-up setting where the goal is to ensure $O(1)$-PRF for each individual panel and the global panel. In~\Cref{sec:prefix-and-panel}, we strengthen the requirements to demand both panel-level and prefix-level representation simultaneously, achieving $O(4^\ell)$-PFC guarantees. Finally, in~\Cref{sec:prefix}, we relax the per-panel requirement and focus solely on prefix-level representation, recovering a constant-factor $O(1)$-PFC approximation for every prefix. All three algorithms guarantee individual fairness.

\section{Warm-up: Proportional Representation Per Panel and Global Panel}
\label{sec:federated-sortition}

As a warm-up, we first consider the case where the goal is to achieve representation with respect to each individual panel  and the global panel, without prefix representation concerns. This setup can be interpreted as a federated sortition problem where $\ell$ panels operate simultaneously, and we seek to ensure representation both within each individual panel and across the union of all panels~\cite{halpern2025federated}.

To design algorithms for this case, we begin by examining how representation guarantees behave when rules satisfying PRF axioms are applied in sequence. In particular, consider a population $V$ and suppose we apply a selection algorithm satisfying the $\alpha$-PRF axiom to obtain a panel $P_1$ of size $k_1$. We then apply another selection algorithm satisfying the $\beta$-PRF axiom  to the selected panel $P_1$, yielding a smaller panel $P_2 \subseteq P_1$ of size $k_2$. This raises a natural question: what proportionality guarantees does $P_2$ achieve with respect to the original population $V$ and panel size $k_2$?  In the next theorem, we show that this approach provides an approximation guarantee that depends on $\alpha$ and $\beta$ when  $k_1$ is divisible by $k_2$. But surprisingly, when $k_1$ is not divisible by $k_2$, then $P_2$ may not provide any finite approximation with respect to $V$.

\begin{theorem} \label{thm:nested-representation}
    Given a population $V$ and panel sizes $k_1$ and $k_2$, let $P_1$ be a panel of size $k_1$ that satisfies $\alpha$-PRF for population $V$, and let $P_2$ be a smaller panel of size $k_2$ that satisfies $\beta$-PRF for population $P_1$. Then,
    \begin{itemize}
        \item when $k_1$ is divisible by $k_2$, $P_2$ satisfies $(2\alpha\cdot\beta+\beta)$-PRF for population $V$.
        \item there exist panel sizes $k_1$ and $k_2$ where $k_1$ is not divisible by $k_2$ and an instance in which $P_2$ does not satisfy $\alpha$-PRF (or $\alpha$-PFC) for population $V$, for any finite $\alpha$.
    \end{itemize}

\end{theorem}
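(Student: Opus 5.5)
The plan for the divisible case is a direct two-stage reduction. Write $m = k_1/k_2 \in \mathbb{N}$. Fix any $S \subseteq V$ with $|S| \ge q\cdot n/k_2$ and maximum pairwise distance $r$.

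\emph{Step 1.} Since $|S| \ge qm \cdot n/k_1$, applying $\alpha$-PRF of $P_1$ for population $V$ with $q' = qm$ gives the set $T = P_1 \cap \bigcup_{v\in S} B(v,\alpha r)$ with $|T| \ge qm$.

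\emph{Step 2.} Any two $t,t'\in T$ have $d(t,s)\le \alpha r$ and $d(t',s')\le\alpha r$ for some $s,s'\in S$. By the triangle inequality, the diameter of $T$ is at most $(2\alpha+1)r$.

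\emph{Step 3.} Now view $P_1$ as the population, of size $k_1$. Then $|T| \ge qm = q\cdot k_1/k_2$, so $T$ is a group entitled to $q$ seats in a panel of size $k_2$. The $\beta$-PRF of $P_2$ for population $P_1$ yields at least $q$ members of $P_2$, each within $\beta(2\alpha+1)r$ of some $t\in T$. This is exactly where divisibility is used: it makes the entitlement $qm$ from the first stage an integer that maps exactly onto entitlement $q$ in the second stage.

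\emph{Step 4 (expected obstacle).} Chaining $p \to t \to s$ naively gives distance $\beta(2\alpha+1)r + \alpha r$ from $S$, which has an extra additive $\alpha$ compared with the claimed $2\alpha\beta+\beta$. I would first try to avoid paying this $\alpha$ by bounding, for each selected $p$, the distance to $S$ directly: use $d(p,t)$ for the \emph{specific} witness $t$ and the fact that $t$ lies in a ball around $S$ whose radius is already counted in the diameter estimate of $T$. If that fails, the honest bound from this argument is $(2\alpha\beta+\beta+\alpha)$, which is still $O(\alpha\beta)$, and I would settle for that.

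For the non-divisible case, the plan is an explicit counterexample using zero-diameter clusters, so that any finite $\alpha$ fails. Take $k_1 = 3$ and $k_2 = 2$. Let $V$ consist of two groups $A$ and $B$, each of size $n/2$. All points within a group are at distance $0$ from each other (allowed, since $d$ is a pseudo-metric), and the two groups are at distance $1$.

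\emph{The choice of $P_1$ is valid.} Take $P_1$ to be two points of $A$ and one point of $B$. This satisfies $1$-PRF, hence $\alpha$-PRF, for $V$ with $k_1=3$. Each group has size $n/2 \ge n/3$ and receives a co-located representative. No group of size at least $2n/3$ has diameter $0$, and groups spanning both clusters have diameter $1$ and are satisfied trivially for $\alpha\ge1$.

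\emph{The choice of $P_2$ is valid.} Within $P_1$ (population $3$, $k_2=2$), the threshold is $3/2$. The singleton $\{b\}$ is not entitled to anything, and the pair in $A$ gets its seat if $P_2$ consists of the two $A$-points. Sets containing $b$ and an $A$-point have diameter $1$, and then both $A$-points lie within distance $\beta$ of the set. One should check that a set of size $3 \ge 2\cdot 3/2$ is satisfied with both seats, which holds for $\beta \ge 1$.

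\emph{$P_2$ fails for $V$.} In $V$ with $k_2=2$, the group $B$ has size $n/2 \ge n/2$ and diameter $0$, yet its nearest panel member is at distance $1 > \alpha\cdot 0$. So both $\alpha$-PRF and $\alpha$-PFC fail for every finite $\alpha$.
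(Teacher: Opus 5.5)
Your Steps 1--3 are exactly the paper's proof. The paper applies $\alpha$-PRF of $P_1$ with entitlement $q\,k_1/k_2$, bounds $\mathrm{diam}(T)\le(2\alpha+1)r$, and applies $\beta$-PRF of $P_2$ inside $P_1$. It then stops at ``$q$ members of $P_2$ within $\beta(2\alpha+1)r$ of some point of $T$'' and declares $(2\alpha\beta+\beta)$-PRF. That is, it silently drops the hop from $T$ back to $S$ that you flagged in Step~4.

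That hop cannot be avoided, so drop your attempt to bound the distance to $S$ ``directly'': the constant $2\alpha\beta+\beta$ is false. Take $\alpha=\beta=1$, $k_1=4$, $k_2=2$, $n=8$, and the shortest-path metric of the following weighted graph:
\begin{itemize}
\item $s_1,\dots,s_4$ are pairwise joined by edges of length $1$;
\item $t_1$ is joined to $s_1,s_2$, and $t_2$ to $s_3,s_4$, by edges of length $1$;
\item $p$ is joined to $t_2$ by an edge of length $3$, and $x$ is joined to $p$ by an edge of length $100$.
\end{itemize}
Let $P_1=\{t_1,t_2,p,x\}$, $P_2=\{p,x\}$ and $S=\{s_1,\dots,s_4\}$. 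Both relevant thresholds are $2q$.

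$P_1$ is $1$-PRF for $V$. A set $X$ containing fewer than $q$ members of $P_1$ is of one of two kinds. Either it is a subset of $S$ with $q=1$, served by $t_1$ or $t_2$. Or it contains at least three points of $S$, hence meets both $\{s_1,s_2\}$ and $\{s_3,s_4\}$, so $t_1,t_2$ lie within $\mathrm{diam}(X)\ge 1$ of $X$. The only remaining shortfall is $X=S\cup\{t_1,t_2\}$ at $q=3$, whose diameter $3$ equals $d(t_2,p)$.

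$P_2$ is $1$-PRF for population $P_1$. Every subset of $P_1$ of size at least $2$ except $\{t_1,t_2\}$ contains $p$ or $x$, and $P_1$ itself contains both. The set $\{t_1,t_2\}$ has diameter $3=d(t_2,p)$.

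Yet $|S|=4=n/k_2$ and $\mathrm{diam}(S)=1$, while $d(p,S)=4$ and $d(x,S)=104$. So $P_2$ is not $3$-PRF for $V$. Your fallback $(2\alpha\beta+\beta+\alpha)$-PRF is therefore the correct statement; it is attained in this instance, and it is what the paper's argument actually proves too. The corollary's constant $26$ for $(\alpha,\beta)=(6,2)$ should accordingly be $32$.

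For the non-divisible case your instance is valid:
\begin{itemize}
\item no single-cluster set reaches size $2n/3$, and mixed sets have diameter $1$ with all of $P_1$ in range, so $P_1$ is $1$-PRF for $V$;
\item inside $P_1$, $\{b\}$ is below the threshold $3/2$, while every larger subset of $P_1$ contains an $A$-point, so $P_2$ is $1$-PRF for $P_1$;
\item $B$ is a diameter-$0$ coalition of size $n/k_2$ with no co-located member of $P_2$.
\end{itemize}
It uses the same mechanism as the paper's example ($k_1=4$, $k_2=3$, points at $0,0,1,10,10,10,11,11,11$). A zero-diameter cluster of size exactly $n/k_2$ gets a single seat in $P_1$. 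In the second stage that seat is a singleton below the threshold $k_1/k_2$, so it can be dropped. Yours is just smaller.
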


\begin{proof}
    We start by the case where $k_1$ is divisible by $k_2$.
    Consider an arbitrary coalition $S \subseteq V$ of size at least $q \cdot \frac{n}{k_2} = q \cdot \frac{n}{k_1} \cdot \frac{k_1}{k_2}$ for some integer $q \geq 1$ and let $r=\max\{d(u,v): u,v\in S\}$ be the diameter.

    Since $P_1$ satisfies $\alpha$-PRF with respect to $V$ and panel size $k_1$, we have
    $$\left| P_1 \cap \bigcup_{v \in S} B(v, \alpha \cdot r) \right| \geq q \cdot \frac{k_1}{k_2}.$$ 
    
    Let $T = P_1 \cap \bigcup_{v \in S} B(v, \alpha \cdot r)$. To bound the diameter of $T$, consider any two individuals $u, v \in T$. By definition of $T$, there exist individuals $u', v' \in S$ such that $u \in B(u', \alpha \cdot r)$ and $v \in B(v', \alpha \cdot r)$. Applying the triangle inequality and using $d(u', v') \leq r$, we obtain:
    $$d(u, v) \leq d(u, u') + d(u', v') + d(v', v) \leq \alpha \cdot r + r + \alpha \cdot r = (2\alpha + 1) \cdot r.$$
    
    Thus, the diameter of $T$ is at most $(2\alpha + 1) \cdot r$. Since $P_2$ satisfies $\beta$-PRF with respect to population $P_1$, we have
    $$ \left| P_2 \cap \bigcup_{v \in T} B(v, \beta \cdot (2\alpha + 1) \cdot r) \right| \geq q, $$
    which establishes that $P_2$ satisfies $(2\alpha\cdot\beta + \beta)$-PRF for population $V$.

   For the case, where  $k_1$ is not divisible by $k_2$ consider a population $V$ of size $9$ on the real axis with locations $V = \{x_1=0, x_2=0, x_3=1, x_4=10, x_5=10, x_6=10, x_7=11, x_8=11, x_9=11\}$. Let $k_1=4$ and $k_2=3$.

    The panel $P_1 = \{x_1, x_3, x_4, x_7\}$ satisfies $1$-PRF for population $V$. The panel $P_2 = \{x_1, x_3, x_4\}$ satisfies $1$-PRF for population $P_1$. However, $P_2$ does not satisfy any approximate PRF for $V$ since coalition $\{x_7, x_8, x_9\}$ requires a representative at location $11$, which $P_2$ lacks.

\end{proof}

Based on the above theorem, for the simple case where all panels have the same  size $k$ across all $\ell$ panels, we can utilize  algorithms from the literature for achieving  approximately PRF for each panel and  the global panel and individual fairness properties. In particular, we can first apply the algorithm called, Fair Greedy Capture by~\citet{Ebadian2025}, for choosing a panel $P$ with size $\ell \cdot k$ which satisfies $6$-PRF  and ensures that each individual is selected with the same probability~\footnote{Technically,~\citet{Ebadian2025} establish this approximation under a slightly different notion of representation; however, by adapting their arguments, the guarantee carries over to PRF as well.}. Then we can apply, the algorithm called, Metric Expanding Approval by~\citet{aziz-lee-chu-vollen-proportional-clustering} with parameters $P$ and $k$  for partitioning the $\ell\cdot k$ individuals in $P$ into $k$  groups of size $\ell$ each.  At a high level, when panel size evenly divides population size, Expanding Approval Rule proceeds by simultaneously expanding balls around each representative in 
$P$. Once a ball captures $\ell$ individuals, these are grouped together, and the process continues with the remaining representatives until $k$ such groups are formed.  \citet{aziz-lee-chu-vollen-proportional-clustering} show that when Metric Expanding Approval  is applied to an underlying population, selecting one representative from each resulting group yields a panel that satisfies $2$-PRF with respect to the population. Building on this, we assign to each individual panel \( P_i \) one representative drawn uniformly at random from each of the \( k \) groups. This guarantees that every panel receives exactly one representative from each group, while each representative is assigned to a panel with probability \( \sfrac{1}{\ell} \).
 Combined with the fact that each individual in the population is selected into \( P \) with  probability  \( \sfrac{\ell\cdot k}{n} \), it follows that every individual is assigned into a panel \( P_i \) with probability \( \sfrac{k}{n} \). By ~\Cref{thm:nested-representation}, we then obtain the following corollary.

\begin{corollary}
    Given population $V$,  there exists a polynomial time algorithm that returns    $\ell$ panels, $P_1, \dots P_\ell$ of size  $k$ each, such that  the global panel, i.e. $\cup_{i\in [\ell]} P_i$, is $6$-PRF, each panel $P_i$ is  $26$-PRF, and each $v\in V$ is included in panel $P_i$ with probability $\sfrac{k}{n}$.
\end{corollary}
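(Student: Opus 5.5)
The plan is to prove the corollary by analysing the two-stage algorithm described just above: run Fair Greedy Capture for a panel of size $\ell k$, then Metric Expanding Approval on that panel, then distribute each group at random across the panels. We verify three things in turn: the global-panel guarantee, the per-panel guarantee, and individual fairness.

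\emph{Global panel.} We run Fair Greedy Capture of \citet{Ebadian2025} on $V$ with panel size $\ell k$ and obtain $P$. Every $P$ in its support satisfies $6$-PRF for population $V$ and panel size $\ell k$; here we rely on the adaptation of their argument to PRF noted in the footnote. Every seat of $P$ is assigned to exactly one $P_i$, so $\bigcup_{i\in[\ell]} P_i = P$. Hence the global panel is $6$-PRF in every realization.

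\emph{Per-panel guarantee.} Apply Metric Expanding Approval to population $P$ of size $\ell k$ with target size $k$. Since $k$ divides $\ell k$, it partitions $P$ into $k$ groups $G_1,\dots,G_k$ of size $\ell$ each. We then pick a uniformly random bijection between each group and the $\ell$ panels, independently across groups. This makes each $P_i$ a panel of size $k$ with exactly one member from each group, and the panels are disjoint.
\begin{itemize}
\item The key step is that \emph{every} such transversal (one member per group) is $2$-PRF with respect to population $P$ and size $k$. This is the guarantee of \citet{aziz-lee-chu-vollen-proportional-clustering}: their analysis shows that any choice of one representative per group satisfies $2$-PRF, not just a specific one.
\item This is the step I expect to need care. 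The corollary needs the guarantee for every panel in the support, and our random choice of representatives could land on any transversal. So I would revisit their proof and check that it only uses the fact that each group is contained in a ball that was grown while capturing $\ell$ points, so that any representative of the group lies within the needed radius.
\item With this, \Cref{thm:nested-representation} applies with $k_1=\ell k$, which is divisible by $k_2=k$, and with $\alpha=6$, $\beta=2$. It gives that each $P_i$ is $(2\cdot 6\cdot 2+2)=26$-PRF for $V$.
\end{itemize}

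\emph{Individual fairness and running time.} Conditioned on $v\in P$, $v$ lies in exactly one group and is sent to panel $i$ with probability $1/\ell$. Fair Greedy Capture gives $\Pr[v\in P]=\ell k/n$. Therefore $\Pr[v\in P_i]=\frac{\ell k}{n}\cdot\frac1\ell=\frac kn$, and summing over the disjoint panels gives $\Pr[v\in\bigcup_i P_i]=\ell k/n$. Fair Greedy Capture, Expanding Approval, and the random assignments all run in polynomial time, which completes the plan.
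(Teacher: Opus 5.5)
Your plan follows the paper's argument step for step: Fair Greedy Capture for a panel of size $\ell k$, Metric Expanding Approval on it with target $k$, a uniformly random bijection from each group to the $\ell$ panels, Theorem~\ref{thm:nested-representation} with $\alpha=6$, $\beta=2$, and fairness from $\frac{\ell k}{n}\cdot\frac{1}{\ell}$. You are right to worry about arbitrary transversals, and the worry resolves as you suggest. Take $T\subseteq P$ with $|T|\ge q\ell$ and diameter $\rho$. By the time the radius reaches $\rho$, at least $q$ groups meet $T$; otherwise at least $\ell$ points of $T$ are still ungrouped, and a ball of radius $\rho$ around one of them would already have formed a group. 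Any member of such a group $G$ lies within $2r_G\le 2\rho$ of a group-mate in $T$, and this is where the factor $2$ comes from.

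The real gap is the last step, which you inherit from the paper: the first bullet of Theorem~\ref{thm:nested-representation} is false with factor $2\alpha\beta+\beta$. Its proof shows only that $q$ members of $P_2$ lie within $\beta(2\alpha+1)r$ of points of $T\subseteq P_1$. PRF for $V$, however, needs them within that distance of points of $S$, and a point of $T$ is only guaranteed to be within $\alpha r$ of $S$. The argument therefore gives the factor $\alpha+\beta(2\alpha+1)$, and the extra $\alpha$ is genuinely needed.

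Here is a counterexample to the stated factor. Take the shortest-path metric on $V=\{s_1,s_2,s_3,s_4,a,b,p,p'\}$ with these edges: unit edges between all pairs $s_i,s_j$; unit edges $as_1,as_2,bs_3,bs_4$; an edge $bp$ of length $3$; and an edge $pp'$ of length $100$. Let $k_1=4$, $k_2=2$, $P_1=\{a,b,p,p'\}$ and $P_2=\{p,p'\}$. A direct check shows $P_1$ is $1$-PRF for $V$. Also $P_2$ is $1$-PRF for $P_1$: the only coalition of $P_1$ avoiding $P_2$ is $\{a,b\}$, whose diameter is $3=d(b,p)$. Yet $S=\{s_1,\dots,s_4\}$ has $|S|=n/k_2$ and diameter $1$, while $d(p,S)=4$. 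So $P_2$ is not $3$-PRF, although $2\alpha\beta+\beta=3$ here.

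With $\alpha=6$ and $\beta=2$, your argument and the paper's therefore certify only $32$-PRF for each $P_i$. Reaching $26$ would need an argument that uses the specific structure of Fair Greedy Capture and Expanding Approval, which neither provides. Your global $6$-PRF claim and the $k/n$ fairness computation are fine as written.
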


However, despite significant effort, we were not able to generalize this result for the case where the panels have different sizes, and the following question remains open.

\begin{openq}
    Does there exist a  distribution over a sequence of panels, $P_1, \ldots, P_{\ell}$, where $P_i$ has size $k_i$ such that each individual  panel $P_i$  satisfies $O(1)$-PRF  and the global panel $\cup_{i\in [\ell]}P_{i}$ satisfies $O(1)$-PRF?
\end{openq}

\section{Prefix and Panel Level Representation}\label{sec:prefix-and-panel}

In this section, we turn our attention to the setting where the goal is to ensure not only global representation but also \emph{prefix representation}. That is, for every time step \( t \in [\ell] \), we require that the cumulative panel \( P_{\leq t} = \bigcup_{j=1}^t P_j \) provides appropriate representation. Ideally, one would show that there exists a distribution over a sequence of panels, $P_1, \ldots, P_{\ell}$, where $P_j$ has size $k$ (or more generally $k_j$), such that each individual panel $P_j$ and each prefix panel $P_{\leq t}$ satisfy a constant-factor approximation of PRF or at least PFC. Despite considerable effort, we are unable to prove whether such a distribution always exists, leaving a tantalizing  question open.

\begin{openq}
    Does there exist a  distribution over a sequence of panels, $P_1, \ldots, P_{\ell}$ such that each individual  panel $P_j$  and each prefix panel $P_{\leq t}$  satisfies $O(1)$-PRF or $O(1)$-PFC? 
\end{openq}

Instead, here we present a novel algorithm that provides weaker approximation guarantees with respect to individual and prefix representation for the case where each panel has size $k$. In particular, we show that there exists a polynomial-time algorithm that achieves an approximation to PFC within each individual panel and each prefix panel, which  grows exponentially with the number of panels $\ell$, while also ensuring that each individual is selected to participate in one of the panels  with equal probability.

\begin{theorem}\label{thm:exponential}
    There exists a polynomial time algorithm that returns a distribution over a sequence of $\ell$ panels $P_1, \ldots, P_{\ell}$ of size $k$ each such that:
    \begin{itemize}
        \item  For each $t\in [\ell]$, panel $P_t$ satisfies $O(4^{\ell})$-PFC;
         \item For each $t\in [\ell]$,  prefix panel $P_{\leq t}$  satisfies $O(4^{\ell-t})$-PFC;
         \item Each individual is selected in $\cup_{t\in [\ell]} P_{t}$ with probability $\sfrac{\ell\cdot k}{n}$. 
    \end{itemize}
     
\end{theorem}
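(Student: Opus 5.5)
The plan is to build a \emph{nested hierarchy of groups}, one level per prefix size $tk$, and read the panels off it. Level $\ell$ is the finest. For each $t$, the level-$t$ groups must correspond to $tk$ ``seats'', one per member of $P_{\le t}$. A level-$(t-1)$ group should be a union of (possibly fractional parts of) level-$t$ groups, so that every set entitled to a seat at prefix size $(t-1)k$ is also covered at every larger prefix size.

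\textbf{Level $\ell$.} Run Fair Greedy Capture of \citet{Ebadian2025} with $\ell k$ balls. This gives a random set $P$ of $\ell k$ individuals, each selected with probability $\sfrac{\ell k}{n}$, which settles individual fairness immediately. Every set $S$ with $|S|\ge \sfrac{n}{\ell k}$ intersects a captured ball of radius $O(\mathrm{diam}(S))$, so $P$ is $O(1)$-PFC. Each member $p\in P$ remembers the center and radius $r_p$ of its ball.

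\textbf{Levels $\ell-1,\dots,1$.} Go top-down. Treat the level-$t$ groups as weighted points, located at their representatives with weight $1$. Run a fractional greedy capture that grows balls until they collect weight $\sfrac{t}{t-1}$. This produces $(t-1)k$ level-$(t-1)$ groups, possibly splitting a level-$t$ group fractionally between two of them.

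Analysis of one level: take $S$ with $|S|\ge \sfrac{n}{(t-1)k}$ and diameter $r$. By induction, $S$ hits the level-$t$ groups within radius $c_t r$, with total weight at least $\sfrac{t}{t-1}$. The level-$(t-1)$ greedy step then either captures that weight in a ball of radius at most $(c_t+1)r$ around one of them, or these groups were already absorbed into earlier, smaller balls. Applying the triangle inequality twice, first from the capturing center to its representative and then to $S$, gives a recurrence of the form $c_{t-1}\le 4c_t+O(1)$. Hence $c_t=O(4^{\ell-t})$, which is the prefix bound.

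\textbf{Ordering the panels.} Each level-$1$ group carries weight $\ell$, i.e.\ $\ell$ units of level-$\ell$ representatives. We need to assign those representatives to panels $1,\dots,\ell$ so that three conditions hold:
\begin{itemize}
\item every level-$t$ group has a representative in $P_{\le t}$;
\item each $P_t$ gets exactly one representative from each level-$1$ group, so $|P_t|=k$;
\item each level-$\ell$ representative is used exactly once.
\end{itemize}
I would encode this as a laminar-family bipartite $b$-matching: representatives on one side, (group, level) slots on the other. The fractional hierarchy provides a feasible fractional solution, and integrality of laminar/flow polytopes then gives an integral one. Using the integral solution only reorders $P$, so individual fairness is preserved.

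\textbf{Panel bound.} A set of size $\sfrac{n}{k}$ has diameter-scale distance $O(c_1 r)=O(4^{\ell})$ to some level-$1$ group, and each panel contains a member of every level-$1$ group. Since a level-$1$ group has radius $O(4^{\ell}r)$ around its representatives, $P_t$ is $O(4^{\ell})$-PFC.

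\textbf{Main obstacle.} The hardest step is making the nesting \emph{consistent and integral} when $t-1$ does not divide $t$. \Cref{thm:nested-representation} warns that naive non-divisible nesting destroys representation. I would try fractional capture combined with the flow-integrality argument first. The point to verify is that splitting a level-$t$ group across two parents costs only the factor $4$ per level, and that the fractional solution really satisfies all the prefix covering constraints simultaneously.
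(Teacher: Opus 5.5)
\textbf{The gap.} The step that fails is the sentence ``by induction, $S$ hits the level-$t$ groups within radius $c_t r$, with total weight at least $\frac{t}{t-1}$.'' Nothing supports the weight part. Your induction carries only a PFC-type statement, i.e.\ one nearby group. For $t\ge 3$, a set with $|S|\ge\frac{n}{(t-1)k}$ is entitled to just $\frac{t}{t-1}<2$ level-$t$ seats, so no greedy-capture or PRF argument yields more than one nearby group. The rest of $S$ can be absorbed by distant groups, and once level-$t$ groups are collapsed to unit-weight points at their representatives, that population mass is gone.

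Concretely, take $k=1$, $\ell=4$, $n=12$: a cluster $S$ of four points of diameter $\varepsilon$ at $0$, two points at $100$, and six at $1000$. Fair Greedy Capture with unit $3$ does the following. It puts three units of $S$ into a small group. It forms two groups at $1000$. It merges the last unit of $S$ with the two points at $100$ into a radius-$100$ group, whose representative lies at $100$ with probability $\frac23$. In that event only weight $1$ sits near $S$, so no level-$3$ ball of radius below $100$ contains $S$'s representative, and your bound $c_3 r$ fails outright. Moreover, every level-$3$ group containing part of $S$'s representative also contains part of a representative at $100$ or $1000$. So nothing in your covering constraints forces $S$'s representative into $P_{\le 3}$, although $|S|=\frac{n}{3}$.

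This is exactly the non-divisible failure of \Cref{thm:nested-representation}. You flag it as the main obstacle, but fractional splitting does not resolve it. It also undermines your integrality step: split groups are not laminar, ``a representative of a group'' is undefined for them, and the fractional solution is asserted rather than constructed.

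\textbf{How the paper avoids it.} The paper never collapses groups to points. At each level $t$ it reruns \MGC\ on the population itself. The threshold $\frac{n}{tk}$ is counted in individuals, and earlier groups are captured only whole. So the hierarchy is integral and laminar, and individuals merely disregarded at a finer level stay available. \Cref{lem:hierarchical} then tracks every $v\in S$, not a single witness. It bounds both $d(c_{G_v^t},v)$ and the radius of $G_v^t$ by $r(4^{\ell-t+1}-1)$. Hence a ball of radius $r(1+2(4^{\ell-t}-1))$ around a point of $S$ contains all of $S$'s level-$(t+1)$ blocks whole, i.e.\ at least $|S|\ge\frac{n}{tk}$ individuals. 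This is what your weight claim was meant to supply, and it is where the factor $4$ per level comes from.

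The ordering is then combinatorial rather than via LP integrality:
\begin{itemize}
\item \FR\ descends to a terminal group of at least $\frac{n}{\ell k}$ individuals.
\item Subgroups needing a seat before panel $t$ are first pulled out of a level-$1$ group to the root.
\item Leftovers go to the earliest open panel.
\item Feasibility is a counting argument (\Cref{lem:find-repr-terminates}, \Cref{lem:prefix-level-repr}).
\end{itemize}
Your ``sample first, then only reorder'' route to individual fairness would be fine, but only if the hierarchy built on the sample certified representation, which in the example above it does not.
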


At a high level, our main algorithm, \NestedBased, operates in two phases. In the first phase, it constructs collections $\G^t$ of disjoint groups of individuals, each of size at least $\sfrac{n}{t \cdot k}$ for every $t \in [\ell]$. Each group in $\G^t$ represents a subset of the population that should be represented within the first $t$ panels.
The algorithm proceeds in reverse order, starting from \( t = \ell \) and forming the collection of groups $\G^t$. As \( t \) decreases, it continues forming new groups while ensuring that previously constructed groups are preserved and hierarchically nested within the new groupings.

In the second phase, the algorithm traverses this hierarchy (visualized as a tree) starting from groups corresponding to smaller values of 
$t$ (i.e., larger groups) and proceeding toward groups corresponding to  larger values of
$t$ (i.e., smaller groups) that are nested within the previous ones. 
 For each such path, the algorithm identifies the terminal group, i.e. a group corresponding to the larger value of $t$ that does not have any  other group nested and selects one individual from this terminal group as the path's representative.  The algorithm then assigns these selected representatives to specific panels in sequential order, ensuring that the assignment respects both  per-panel representation  and  prefix-level representation.
 
Below, we describe these two main phases separately in detail.

\textbf{Tree Construction.} 
First, we describe an algorithm called, \MGC\ which is a variation of \citet{chen2019proportionally}'s algorithm that takes population $V$, metric $d$, target panel size parameter $K$, and a partition  $\G$ of the population into disjoint groups. The algorithm expands balls around each individual, initially marked incomplete. A ball captures a group in $\G$ only when it captures all its members; partial capture does not count. When an incomplete ball captures at least \( \sfrac{n}{K} \) individuals, it becomes \emph{complete}, its captured groups are consolidated as a single group and added to $\G'$, then disregarded from further processing. The algorithm continues expanding all  balls, disregarding any groups in $\G$ that are captured by complete balls. New groups are formed when incomplete balls capture total of at least $\sfrac{n}{K}$ uncovered individuals. The process terminates when all groups are disregarded, returning $\G'$.  By construction, each input group $G \in \G$ is either entirely contained within some output group $G' \in \G'$ or entirely excluded. For each group \( G \) created during the execution of \MGC, we denote by \( c_G \) its 
center, i.e., the point around which the ball was grown when the group was formed, 
and by \( r_G \) its radius, defined as the radius of the ball at the moment \( G \) 
was created. 
For complete algorithmic details, see \Cref{sec:mgc}.

\NestedBased starts by constructing a tree rooted at $\Rr$, initially containing singleton groups $\{\{v\}: v \in V\}$ as children. 
Then, from \( t = \ell \) down to \( 1 \), \NestedBased\ calls \MGC, which at iteration \( t \) takes \( t \cdot k \) as parameter \( K \) and uses the current children of \( \Rr \) as its group partition, outputting a collection \( \G^t \) in which each group contains at least \( \sfrac{n}{t \cdot k} \) individuals.
 After each execution, \NestedBased updates the tree structure as follows. For each group $H \in \G^t$, all groups contained in $H$ are removed from $\Rr$'s children, and $H$ itself becomes a new child of $\Rr$. 
 An example is illustrated at~\Cref{fig:combined}. 
 This tree structure ensures that if leaf node $v$ is selected (also referring to an individual), any group on the path from $\Rr$ to $v$ is approximately represented, as shown in the following lemma.

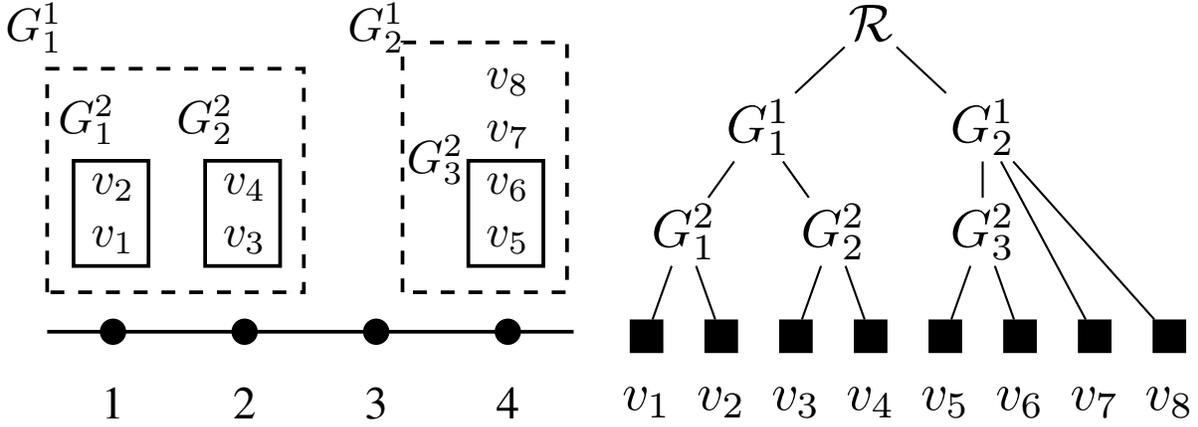
\begin{figure}[t]
    \centering
    \begin{minipage}{0.49\columnwidth}
        \centering
        \resizebox{\textwidth}{!}{

\begin{tikzpicture}[scale=1]

\draw[thick] (0.5,0) -- (4.5,0);

\foreach \x in {1,2,3,4} {
    \fill (\x,0) circle (0.1);
    \node[below] at (\x,-0.3) {\x};
}

\draw[thick] (0.7,0.5) rectangle (1.27,1.3);
\node at (1,0.7) {$v_1$};
\node at (1,1.1) {$v_2$};

\draw[thick] (1.7,0.5) rectangle (2.27,1.3);
\node at (2,0.7) {$v_3$};
\node at (2,1.1) {$v_4$};

\draw[thick] (3.7,0.5) rectangle (4.27,1.3);
\node at (4,0.7) {$v_5$};
\node at (4,1.1) {$v_6$};

\node at (4,1.5) {$v_7$};
\node at (4,1.9) {$v_8$};

\draw[dashed, thick] (0.5,0.3) rectangle (2.45,2);
\node[above] at (0.4, 2) {$G^1_1$};
\node[above] at (0.8, 1.3) {$G^2_1$};
\node[above] at (1.7, 1.3) {$G^2_2$};

\draw[dashed, thick] (3.2,0.3) rectangle (4.45,2.2);
\node[above] at (3,2) {$G^1_2$};
\node[above] at (3.45,1) {$G^2_3$};


\end{tikzpicture}
        }
    \end{minipage}
    \hfill
    \begin{minipage}{0.49\columnwidth}
        \centering
        \resizebox{\textwidth}{!}{

\begin{tikzpicture}[
  level distance=0.7cm,
  level 1/.style={sibling distance=1.5cm},
  level 2/.style={sibling distance=1cm},
  level 3/.style={sibling distance=0.5cm},
  every node/.style={circle, draw, fill=black, inner sep=1pt, minimum size=4pt},
  label/.style={rectangle, draw=none, fill=none, text=black}
]

\node[label] (root) {$\mathcal{R}$}
  child {
    node[label] {$G_1^1$}
    child {
      node[label] {$G_1^2$}
      child {
        node[rectangle, draw, fill=black, minimum size=6pt] {}
        node[label, below=0.3cm] {$v_1$}
      }
      child {
        node[rectangle, draw, fill=black, minimum size=6pt] {}
        node[label, below=0.3cm] {$v_2$}
      }
    }
    child {
      node[label] {$G_2^2$}
      child {
        node[rectangle, draw, fill=black, minimum size=6pt] {}
        node[label, below=0.3cm] {$v_3$}
      }
      child {
        node[rectangle, draw, fill=black, minimum size=6pt] {}
        node[label, below=0.3cm] {$v_4$}
      }
    }
  }
  child {
    node[label] (g21) {$G_2^1$}
    child {
      node[label] {$G_3^2$}
      child {
        node[rectangle, draw, fill=black, minimum size=6pt] {}
        node[label, below=0.3cm] {$v_5$}
      }
      child {
        node[rectangle, draw, fill=black, minimum size=6pt] {}
        node[label, below=0.3cm] {$v_6$}
      }
    }
  };

\node[rectangle, draw, fill=black, minimum size=6pt] (v7) at ([yshift=-1.4cm, xshift=0.75cm]g21) {};
\node[label, below=0.3cm] at (v7) {$v_7$};

\node[rectangle, draw, fill=black, minimum size=6pt] (v8) at ([yshift=-1.4cm, xshift=1.25cm]g21) {};
\node[label, below=0.3cm] at (v8) {$v_8$};

\draw (g21) -- (v7);
\draw (g21) -- (v8);

\end{tikzpicture}
        }
    \end{minipage}
    \caption{
       This figure illustrates the hierarchical group structure built in Phase 1 of \NestedBased\ for eight individuals ($v_1,v_2$ at location 1; $v_3,v_4$ at 2; $v_5$–$v_8$ at 4) with parameters $\ell=2$ and $k=2$. The left-hand side shows two calls to \MGC: first with $t=2$, which forms $\G^2=\{G^2_1,G^2_2,G^2_3\}$ (leaving $v_7$ and $v_8$ ungrouped), and then with $t=1$, applied to these groups and the remaining individuals to obtain $\G^1=\{G^1_1,G^1_2\}$. The right-hand side depicts the resulting tree, with groups as internal nodes and individuals as leaves.
        }
    \label{fig:combined}
\end{figure}

\begin{lemma} \label{lem:hierarchical}
Let \( P \) be a panel containing a representative from each group \( G \in \G^j \). Then \( P \) satisfies \( O(4^{\ell - j}) \)-PFC for population \( V \) and panel size \( j \cdot k \).
\end{lemma}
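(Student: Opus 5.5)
Our plan is to follow the classical Greedy Capture argument of \citet{chen2019proportionally} and pay for the ``all-or-nothing'' capture rule of \MGC\ with a geometric loss at each level of the hierarchy.

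We would first establish a structural invariant by backward induction on $j$ from $\ell$ down to $1$. For every group $H \in \G^j$, every member $u \in H$ satisfies $d(u, c_H) \le D_j$, where $D_j$ is some radius such that $D_\ell = r_H$ and $D_j \le r_H + 2 \max_{G \in \G^{j+1}} D_{j+1}(G)$. The reason is this. When $H$ is formed, its ball $B(c_H, r_H)$ has captured whole groups $G$ from the previous level (or singletons). A captured group $G \in \G^{j+1}$ is fully inside the ball, so each of its members is within $r_H$ of $c_H$. The slack we really need concerns group diameters, so we track $\operatorname{diam}(H)$. We expect a bound of the form $\operatorname{diam}(H) \le 2 r_H$ plus a contribution from the captured subgroups. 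This is what will produce the factor-$4$ recursion.

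Next comes the main argument, which we would run by contradiction as in Greedy Capture. Fix $S \subseteq V$ with $|S| \ge n/(jk)$, and let $r$ be the optimal radius, so that $S \subseteq B(y, r)$ for some $y$. Consider the run of \MGC\ at iteration $j$ with $K = jk$, and look at the ball around $y$ (restricting \MGC\ to centers in $V$, so that $y \in V$). The key difficulty is that the ball around $y$ with radius $r$ may fail to ``capture'' $S$. Members of $S$ lie in input groups, namely children of $\Rr$ built at levels $j' > j$, and those groups may stick out of $B(y, r)$. We handle this by enlarging the radius to $r + \max \operatorname{diam}$ of the relevant child groups. Either some member of $S$ already belonged to a group that a complete ball disregarded earlier, or the ball around $y$ with the enlarged radius captures all of $S$'s groups and becomes complete.

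In both cases some group $H \in \G^j$ contains a point $v \in S$, and it was created at radius $r_H \le r + (\text{diameter of children})$. The representative $p \in P$ of $H$ lies in $H$, so $d(v, p) \le \operatorname{diam}(H)$. The child groups have diameter that is itself bounded in terms of radii at which they were formed. We would then bound those radii by $r$ as well. A child group $G \in \G^{j'}$ intersecting $S$ was formed by a ball that became complete before the ball around $y$ at the same enlarged radius would have, at level $j'$, because $|S| \ge n/(jk) \ge n/(j'k)$. This bounds $r_G$ in terms of $r$ plus the diameters at level $j'+1$.

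Unrolling the recursion gives $\operatorname{diam}_j \le c\, r + 4\operatorname{diam}_{j+1}$, where $\operatorname{diam}_j$ denotes the largest diameter among groups in $\G^j$ that intersect $S$. With base case $\operatorname{diam}_{\ell} = O(r)$, this yields $O(4^{\ell-j})\, r$, which proves the lemma.

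The main obstacle is the step just described: bounding $r_G$ for descendant groups intersecting $S$ in terms of $r$ alone. The difficulty is that those groups were formed in earlier calls, before $S$'s ball could complete. We would try the first approach of arguing level by level. At each level $j'$, either $S$ is absorbed at radius $\le r + 2\operatorname{diam}_{j'+1}$, or a point of $S$ was captured earlier at a smaller radius. Combining the ball radius with two diameters via the triangle inequality is what produces the factor $4$.
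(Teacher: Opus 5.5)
Your overall plan matches the paper's: backward induction over the levels, and a Greedy-Capture contradiction in which the ball around the center $y$ of $S$ is enlarged by the extent of the input groups (children of $\Rr$) that contain members of $S$. Your recursion $\operatorname{diam}_t\le 6r+4\operatorname{diam}_{t+1}$ is numerically the paper's $R_t=3r+4R_{t+1}$, i.e.\ $R_t=(4^{\ell-t+1}-1)r$.

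The gap is exactly at the step you flag as the main obstacle, and the justification you give there is false. The contradiction only shows that \emph{one} member $v^*\in S$ is processed (included in, or disregarded by, some group) by radius $\delta_0=r+D$, where $D$ bounds how far the input group of a member of $S$ reaches from it. It does not show that every group meeting $S$ was formed before the ball around $y$ would have completed. Once part of $S$ is removed, that ball may never complete, and the remaining members can enter a group formed at a larger radius. This already happens at level $\ell$ with singleton inputs. Take $\ell=2$, $j=1$, threshold $m=n/(2k)$, and put on a line $m+1$ members of $S$ at $-1$, $y=0\in S$, $m-2$ members at $+1$, and two non-members at $2.5$. Then $|S|=2m$ and $r=1$. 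The points at $-1$ form a group at radius $0$, and that group disregards $y$ at radius $1$. The members at $+1$ enter a new group of $\G^\ell$ only at radius $1.5>r$. Your recursion needs a bound on \emph{every} input group that meets $S$ at the next level, so the induction does not go through as you set it up.

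The missing idea is that the group $G^*$ which processed $v^*$ is complete, and its ball keeps growing. Every other $v\in S$ has $d(c_{G^*},v)\le\delta_0+2r$. So by radius $\delta_0+2r+D=3r+2D$ that ball contains $v$'s whole input group and disregards it, unless $v$ was processed earlier. Hence \emph{every} member of $S$ is processed by radius $R_t=3r+2D$. Every group of $\G^t$ that includes or disregards a member of $S$ therefore has radius at most $R_t$ and center within $R_t$ of that member. Setting $D=2R_{t+1}$ gives the factor $4$; here you must take the maximum over all levels $t'>t$, since a member only disregarded at level $t+1$ still sits in an older input group. This is the paper's ``for all other individuals $v\in S$'' step, and your remark about ``two diameters'' has to become precisely this argument.

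Two smaller corrections. First, the invariant in your first paragraph is vacuous: the all-or-nothing rule already gives $I(H)\subseteq B(c_H,r_H)$ for every formed group, so all the work lies in bounding $r_H$ in terms of $r$. Second, in the disregard case the group $H\in\G^j$ need not contain any point of $S$. There you should bound $d(v,p)\le d(v,c_H)+d(c_H,p)\le 2R_j$ rather than use $\operatorname{diam}(H)$.
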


\textbf{Panel Assignment.} The algorithm ensures representation for both individual panels and prefix panels by utilizing the constructed tree structure \( \Rr \) and the above lemma by enforcing the following two properties:
\begin{itemize}
    \item[(a)] Each individual panel contains one representative from each group \( G \in \G^1 \), ensuring approximate PFC representation for every panel.
    \item[(b)] For each group \( G \in \G^t \), there exists a representative from \( G \) among the first \( t \) panels, ensuring approximate PFC representation for every prefix up to time \( t \).
\end{itemize}
To establish these properties, we first introduce the \FR\ algorithm, which the algorithm employs in the second phase.
\FR\ takes as input the tree  $\Rr$ and a target group 
$G$, corresponding to a node in the tree.
Starting from  $G$, it recursively selects a subgroup contained in  $G$ that belongs to some $\G^j$  with the smallest possible index $j$, continuing this process until it reaches a group consisting only of leaf nodes.
Intuitively, \FR\ traces a path from the target group down to a leaf node which we call \emph{trajectory}, always moving through subgroups that are required to be represented earlier. Upon reaching the terminal node, it arbitrarily selects $\sfrac{n}{(\ell \cdot k)}$ leaves (equivalently saying individuals) as the sampling group $Q$. Notice that by construction any formed group either contains at least $\sfrac{n}{(\ell \cdot k)}$ individuals or another group,   implying that such a trajectory always exists.

\begin{algorithm}[H]
\caption{\FR}\label{alg:FR}

\begin{algorithmic}[1]
\STATE  \textbf{Input:}  Target group $G$
\STATE \textbf{Output:} Modified group $G'$ and sample group $Q$

\IF{$G$ contains a group $H \in \G^{i}$ for some $i \leq \ell$}
    \STATE Let $H^* \in G$ be the group in $\G^i$ with minimum index $i$
        \STATE  $ H', Q \leftarrow \FR( H^*)$
    \STATE Define modified target group $G' \leftarrow G \setminus \{H^*\} \cup H'$
    \STATE Return $G'$, $Q$
\ELSE
    \STATE Let $Q \subseteq G$ be an arbitrarily selected subset of size $\frac{n}{\ell \cdot k}$
    \STATE Return $G \setminus Q$, $Q$
\ENDIF

\end{algorithmic}
\end{algorithm}

Then, in two sub-phases, \NestedBased\ enforces the two properties respectively, as follows.

\paragraph{Satisfying Property (a) - Panel Representation:} This property requires that every group $G \in \G^1$ has a distinct representative in each of the $\ell$ panels. The algorithm addresses this requirement by processing groups individually through a sequential panel assignment procedure. For each group $G \in \G^1$, the algorithm iterates through panels $P_1, P_2, \ldots, P_\ell$ in order, seeking to assign one representative from $G$ to each panel. This systematic approach is feasible because group $G$ initially contains at least $\sfrac{n}{k}$ individuals, while each execution of \FR\ consumes only $\sfrac{n}{(\ell \cdot k)}$ individuals through the sampling group. Consequently, the algorithm can perform exactly $\ell$ iterations to populate all required panels.

However, a complication arises when $G$ contains subgroups that demand earlier representation. Specifically, when processing panel $P_t$, the algorithm first examines whether any subgroup $H \in G$  belongs to $\G^j$ for some $j < t$. Such a subgroup might violate the prefix representation guarantee when it is represented in panel $P_t$, since it requires representation in the first $j$ panels. To resolve this conflict, the algorithm extracts this problematic subgroup from $G$ and relocates it as direct children of the root $\Rr$, setting it aside for representation during the prefix sub-phase. Crucially, $H$ has not been represented yet and we are simply deferring its assignment to ensure it gets placed in an appropriately early panel.

The key is that removing $H$ and its entire sub-tree still preserves sufficient population in $G$ to support $\ell$ executions, as we establish in our analysis in~\Cref{sec:proof-of-exponential}.
After removing problematic subgroups, all remaining groups in $G$ lives in some $\G^i$ with $i \geq t$, allowing for the safe execution of \FR\ to obtain sampling group $Q$. The algorithm then samples an individual $v \in Q$, assigns $v$ to panel $P_t$ (where $v$ represents all groups along execution trajectory of \FR), and updates the tree structure.

The tree update process permanently removes all sampled individuals in $Q$ from future consideration, eliminates all intermediate groups along the trajectory except the root group $G$, and flattens the structure by having $G$ directly contain the remaining children groups of these intermediate groups. After completing all $\ell$ panels for group $G$, the algorithm removes $G$ from $\Rr$ and promotes any remaining children to become direct children of the root.

\paragraph{Satisfying Property (b) - Prefix Representation:} Groups requiring prefix representation (i.e., those set aside in the previous step) plus any other children of $\Rr$, accumulate as children of $\Rr$. While the root contains groups ($\Rr \neq \emptyset$), the algorithm executes \FR\ starting from $\Rr$ to obtain sampling group $Q$. It samples an individual $v \in Q$ and assigns $v$ to panel $P_t$ where $t$ is the minimum index such that $|P_t| < k$, and then it applies the same tree structure updates as before. As we show in the analysis,  if a group on execution trajectory of \FR belongs to $\G^t$ for some minimum value $t$, then there must be available capacity in the first $t$ panels to accommodate the selected representative. The algorithm terminates when $\Rr = \emptyset$, at which point all groups have received appropriate representation in accordance with their prefix requirements.

    \begin{algorithm}[!hbt]
    \caption{\NestedBased}\label{alg:TSA}

    \begin{algorithmic}[1]
   \STATE  \textbf{Input:} $V$, $d$, Groups $\G$, Parameter $\ell$, Panel size $k$.
    
    \STATE \textbf{Output:} Panels $P_1, \ldots, P_\ell$.
    \STATE \textbf{---Phase 1: Tree Formation---}
    \STATE Initialize tree $\Rr$ with children $V$, i.e., $\Rr \leftarrow V$
    \FOR{$t = \ell$ \textbf{down to} $1$}
        \STATE 
            Let $\G^t \leftarrow \MGC(V, d, t \cdot k, \Rr)$
        
        \FOR{$H \in \G^t$}
            \STATE Remove groups in $H$ from the root $\Rr$
            \STATE Add $H$ as a child of the root $\Rr$
        \ENDFOR
    \ENDFOR
    \STATE \textbf{---Phase 2: Panel Assignment---}
    \STATE \textit{---Phase 2.1: Panel Representation---}
    \FOR{$G \in \G^1$}
        \FOR{$t=1$ \textbf{to} $\ell$ } \label{line:ell-step-loop}
            \WHILE{there is a group $H \in G \cap \G^i$ for some $i < t$}  \label{line:ensures-excess-individuals}
                \STATE $G \leftarrow G \setminus \{H\}$ and $\Rr \leftarrow \Rr \cup \{H\}$.
            \ENDWHILE
                \STATE  $ G', Q \leftarrow \FR(G)$.
            \STATE Sample $v \in Q$ uniformly, assign to $P_t$
            \STATE Update $G \leftarrow G'$
        \ENDFOR
        \STATE Remove $G$ from $\Rr$, add its children to $\Rr$
    \ENDFOR
     \STATE \textit{---Phase 2.2: Prefix Representation---}
    \WHILE{$\Rr \neq \emptyset$}
        \STATE  $\Rr', Q \leftarrow \FR(\Rr)$
        \STATE Sample $v \in Q$, assign to $P_t$ with min $t$ s.t. $|P_t| < k$
        \STATE Update the root $\Rr \leftarrow \Rr'$
    \ENDWHILE
    \STATE \textbf{return} $P_1, \ldots, P_\ell$
    \end{algorithmic}
    \end{algorithm}

These two properties together with Lemma~\ref{lem:hierarchical} imply representation guarantees of Theorem~\ref{thm:exponential}.

\subsection{Modified Greedy Capture}
\label{sec:mgc}

In this section, we present the full algorithmic details of \MGC, a variation of \citet{chen2019proportionally}'s algorithm that serves as a key subroutine. \MGC\ takes input population \( V \), metric \( d \), target panel size \( K \), and a partition \( \G \) of disjoint groups over \( V \). The algorithm initializes an empty set $\G' = \emptyset$ to store the resulting groups.

The algorithm expands balls around each individual, initially marked incomplete. A ball captures a group in $\G$ when it encapsulates \emph{all} group members. Crucially, partial group capture does not count: when a ball captures some but not all members of a group, neither the partially captured individuals nor the group itself are considered captured by the ball. Only when every member of a group lies within the ball's radius is the entire group deemed captured. When an incomplete ball captures at least \( \sfrac{n}{K} \) individuals, it becomes \emph{complete}, its captured groups are consolidated as a single group and added to $\G'$, then disregarded from further processing.
The algorithm continues expanding all  balls, disregarding any groups in $\G$ that are captured by complete balls. New groups are formed when incomplete balls capture total of at least $\sfrac{n}{K}$ uncovered individuals. The process terminates when all groups are disregarded, returning $\G'$. 

Note that by construction, each group $G \in \G$ is either entirely contained within some group $G' \in \G'$ or entirely excluded from $\G'$. For each group \( G \) created during the execution of \MGC, we denote by \( c_G \) its \emph{center}, i.e., the point around which the ball was grown when the group was formed, and by \( r_G \) its \emph{radius}, defined as the radius of the ball at the moment \( G \) was created. 

For nested groups, we define the recursive function \(I(\cdot)\) returning all individuals in group \(G\) and its descendants. 
Here, groups are understood as sets that may contain either individuals or other groups, forming a hierarchical structure:
\[
I(G) = \begin{cases}
    \{G\} & \text{if } G \in V, \\
    \bigcup_{H \in G} I(H) & \text{otherwise}.
\end{cases}
\]

Notice that for a single individual $v \in V$, $I(v)$ returns a singleton set containing $v$, and for hierarchical groups, it returns the set containing all individuals by recursively traversing the group and all its descendants.

\begin{algorithm}[H]
\caption{\MGC}\label{alg:MGC}

\begin{algorithmic}[1]
\STATE \textbf{Input:}  $V$,  $d$, Parameter $K$, Groups $\G$
\STATE \textbf{Output:} Collection of groups $\G'$

\STATE Initialize $\delta \leftarrow 0$
\STATE Let $U \leftarrow \G$ be the collection of uncovered groups
\STATE Let $\G' \leftarrow \emptyset.$
\STATE Define $B^{\G}(x,\delta) := \{G \in U : I(G) \subseteq B(x, \delta)\}$
\WHILE{$U \neq \emptyset$}
    \STATE Increase $\delta$ continuously
    \FOR{each group $G \in \G'$}
        \STATE Let $S = B^{\G}(c_G, \delta)$ where $c_G$ is the center of $G$.
        \STATE Disregard groups in $S$ from the uncovered groups collection, i.e., $U \leftarrow U \setminus S$.
    \ENDFOR
    \WHILE{there exists $x \in V$ such that $\left|\bigcup_{H \in B^{\G}(x, \delta) \cap U} I(H) \right| \geq \frac{n}{K}$}
        \STATE Let $N = B^{\G}(x, \delta) \cap U$
        \STATE Capture groups in $N$, i.e., $\G' \leftarrow \G' \cup \{N\}$
        \STATE Disregard groups in $N$ from the uncovered groups collection, i.e., $U \leftarrow U \setminus N$
    \ENDWHILE
\ENDWHILE
\STATE \textbf{return} $\G'$
\end{algorithmic}
\end{algorithm}

\subsection{Proof of Lemma~\ref{lem:hierarchical}}
\begin{proof}[Proof of Lemma~\ref{lem:hierarchical}]
Fix an arbitrary index $j \in [\ell]$. Consider an arbitrary coalition $S$ with $|S| \geq \sfrac{n}{(j \cdot k)}$, and suppose there exists an individual $x \in S$ such that $d(x,y) \leq r$ for all $y \in S$ and some $r \geq 0$. We will prove that there exists a panel member $p \in P$ such that $\min_{y \in S} d(y, p) \leq r \cdot 2 \cdot 4^{\ell-j+1}$.

For each index $t \in \{j, j+1, \ldots, \ell\}$, let $\G^t$ denote the collection of groups formed during the execution of \MGC\ at iteration $t$ in the \NestedBased\ algorithm. For any individual $v \in S$ and index $t$, define $G_v^t \in \G^t$ as the unique group in $\G^t$ that disregards individual $v$ during the execution of \MGC. We say that individual $v$ is \emph{processed} by group $G_v^t$ if $v$ is captured while the ball centered at $c_{G_v^t}$ was still incomplete and included in $G_v^t$, or if $v$ is captured after the ball has opened and is simply disregarded.

We prove by backward induction on $t$ (from $t = \ell$ down to $t = j$) that for every individual $v \in S$, the following properties hold:
\begin{itemize}
    \item[(a)] \textbf{Proximity:} $d(c_{G_v^t}, v) \leq r \cdot (4^{\ell-t+1}-1)$
    \item[(b)] \textbf{Bounded radius:} $\max_{y \in I(G_v^t)} d(c_{G_v^t}, y) \leq r \cdot (4^{\ell-t+1}-1)$
\end{itemize}
where $c_G$ denotes the center of group $G$ and $I(G)$ denotes the set of individuals in group $G$.

\textbf{Base Case ($t = \ell$) :}
In this case, \MGC\ operates with parameter $K = \ell \cdot k$, so groups become complete when they contain at least $\sfrac{n}{(\ell \cdot k)}$ individuals. We claim that there exists an individual $v^* \in S$ such that both $d(c_{G_{v^*}^\ell}, v^*) \leq r$ and $\max_{y \in I(G_{v^*}^\ell)} d(c_{G_{v^*}^\ell}, y) \leq r$.

Suppose for contradiction that no such individual exists. Then when \MGC\ reaches radius $\delta = r$, no member of $S$ has been disregarded by any complete group. However, since $|S| \geq \sfrac{n}{(j \cdot k)} \geq \sfrac{n}{(\ell \cdot k)}$ and all members of $S$ lie within distance $r$ of point $x$, the ball $B(x, r)$ contains at least $\sfrac{n}{(\ell \cdot k)}$ individuals from $S$. By the algorithm's design, this ball would become complete and capture some of these individuals, disregarding all others in $S$, which contradicts our assumption. Therefore, such a $v^*$ exists.

For this $v^*$, properties (a) and (b) hold with bound $r = r \cdot (4^{\ell-\ell+1}-1) = r \cdot (4^1 - 1) \leq 3r$. For any other individual $v \in S$, the triangle inequality yields:
$$d(c_{G_{v^*}^\ell}, v) \leq d(c_{G_{v^*}^\ell}, v^*) + d(v^*, x) + d(x, v) \leq r + r + r = 3r.$$
Since every individual $v \in S$ must be processed by some group (either $G_{v^*}^\ell$ or another group with appropriately bounded center and radius), the base case is established.

\textbf{Inductive Step ($t < \ell$) :}
Assume the inductive hypothesis holds for all indices $t' > t$. We prove the claim for index $t$. At this step, \MGC\ operates with parameter $K = t \cdot k$, so groups become complete when they contain at least $\sfrac{n}{(t \cdot k)}$ individuals.

\begin{claim}
    There exists an individual $v^* \in S$ such that:
    \begin{itemize}
        \item $d(c_{G_{v^*}^t}, v^*) \leq r \cdot (1 + 2 \cdot (4^{\ell-t} - 1))$,
        \item $\max_{y \in I(G_{v^*}^t)} d(c_{G_{v^*}^t}, y) \leq r \cdot (1 + 2 \cdot (4^{\ell-t} - 1))$.
    \end{itemize}
\end{claim}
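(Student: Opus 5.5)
The plan is to mimic the base-case argument, with the singletons replaced by the groups of $\G^{t+1}$ that contain members of $S$. By the inductive hypothesis at $t+1$, each $v \in S$ lies in a group $G_v^{t+1}$ with $d(c_{G_v^{t+1}}, v) \le r(4^{\ell-t}-1)$ and $\max_{y \in I(G_v^{t+1})} d(c_{G_v^{t+1}}, y) \le r(4^{\ell-t}-1)$.

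The one subtlety is that some $v \in S$ may have been merely disregarded at level $t+1$, so that they are still singleton children of $\Rr$. For such $v$ we use the trivial bound: a singleton is fully captured by a ball around $x$ of radius $r$.

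\textbf{Step 1: every unit containing a member of $S$ is near $x$.} Let $U$ be the current child of $\Rr$ (a group of $\G^{t+1}$, or a singleton) containing $v \in S$. For any $y \in I(U)$,
\[
d(x,y) \le d(x,v) + d(v,c_U) + d(c_U,y) \le r + 2r(4^{\ell-t}-1) = R,
\]
where $R := r(1+2(4^{\ell-t}-1))$. Hence the ball $B(x,R)$ fully captures every child of $\Rr$ that meets $S$. In total these children contain at least $|S| \ge n/(jk) \ge n/(tk)$ individuals, since $t \ge j$.

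\textbf{Step 2: contradiction, as in the base case.} Suppose no $v^* \in S$ satisfies the two bounds of the Claim. Then, when \MGC{} (with $K = tk$) reaches $\delta = R$, no child of $\Rr$ meeting $S$ has yet been disregarded by a complete ball of radius at most $R$. Concretely, such a unit can only be removed by being captured in a new group, or by lying within the current radius of an existing complete center. In either case the group's center lies within $\delta \le R$ of all of that unit's points, and since $v^*$ belongs to the unit, both bounds of the Claim would hold.

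So at radius $R$ all these units are still uncovered and lie inside $B^{\G}(x,R)$. Their total size is at least $n/(tk)$, so the inner while-loop must have fired for some ball by then. That ball captures at least one of these units, which gives the contradiction.

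\textbf{Main obstacle.} The hard step is making this ``first removal'' argument precise. We must check that whenever a unit is disregarded at radius $\delta$, the responsible center $c$ satisfies $\max_{y \in I(\text{unit})} d(c,y) \le \delta$. This holds because removal is defined by $I(H) \subseteq B(c,\delta)$. We must also confirm the bound on $r_G$ itself, i.e.\ that all of $I(G)$ lies within $r_G \le R$ of $c_G$. If the group is created at radius $R' \le R$, all its captured units satisfy this. I would also take care with continuous $\delta$ and ties by taking the earliest time at which a unit meeting $S$ is touched.

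The remaining bookkeeping of the induction then follows as in the base case. For an arbitrary $v \in S$, the triangle inequality through $v^*$ and $x$ gives $d(c_{G_{v^*}^t}, v) \le R + 2r \le r(4^{\ell-t+1}-1)$, since $1 + 2(4^{\ell-t}-1) + 2 \le 4^{\ell-t+1}-1$.
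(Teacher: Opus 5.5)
Your argument is the same as the paper's. You use the triangle inequality to put every child of $\Rr$ that meets $S$ inside $B(x,R)$, then invoke the completion rule of \MGC{} at $\delta=R$ for the contradiction. Your first-removal analysis makes explicit what the paper compresses into ``by the algorithm's design'': a unit removed at radius $\delta'\le R$ is removed by a center within $\delta'$ of all its points, and that center belongs to a group with $I(G)\subseteq B(c_G,r_G)$ and $r_G\le\delta'$. It also fixes the paper's sloppy reference to the groups $\{G_v^{t+1}\}$, which need not contain $v$. The proposal is essentially correct, with one omitted case and one imprecise sentence.

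\textbf{The omitted case (Step~1).} An individual $v\in S$ that was only disregarded at level $t+1$ need not be a singleton, contrary to what you assert. Its unit can be a group $H\in\G^{t'}$ with $t'\ge t+2$ that captured $v$ at level $t'$ and was then merely disregarded at the intermediate levels. Disregarded units are never absorbed into a new group, so $H$ is still a child of $\Rr$ at level $t$. Your case list (``a group of $\G^{t+1}$, or a singleton'') does not cover this. The fix is the inductive hypothesis at $t'$, which the paper assumes for all $t'>t$. Since $H=G_v^{t'}$, both bounds hold with $r(4^{\ell-t'+1}-1)\le r(4^{\ell-t}-1)$. This gives $d(x,y)\le R$ for all $y\in I(H)$, and Step~1 then goes through unchanged.

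\textbf{The imprecise sentence (Step~2).} The first ball that fires at $\delta=R$ need not capture one of your units. The correct statement is that the inner loop cannot exit while $B^{\G}(x,R)\cap U$ still holds at least $n/(tk)$ individuals. Hence some capture at $\delta=R$ must remove one of those units, which yields the contradiction.
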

\begin{proof}[Proof of Claim]
    Suppose for a contradiction that no such individual exists. 
    By the inductive hypothesis, for each $v \in S$, we have: 
    $$d(c_{G_v^{t+1}}, v) \leq r \cdot (4^{\ell-t} - 1).$$
    During the execution of \MGC\ at index $t$, when $\delta = r \cdot (1 + 2 \cdot (4^{\ell-t} - 1))$, consider the ball $B(x, r)$. Recall that $S \subseteq B(x, r)$.
    
    Now note that for any $v \in S$ and any $y \in I(G_v^{t+1})$:
    \begin{align*}
    d(x, y) &\leq d(x, v) + d(v, c_{G_v^{t+1}}) + d(c_{G_v^{t+1}}, y)\\
    &\leq r + r \cdot (4^{\ell-t} - 1) + r \cdot (4^{\ell-t} - 1) \\
    &= r \cdot (1 + 2 \cdot (4^{\ell-t} - 1)).
    \end{align*}
    
    Therefore, the ball $B(x, r \cdot (1 + 2 \cdot (4^{\ell-t} - 1)))$ contains all individuals from groups $\{G_v^{t+1} : v \in S\}$, which includes at least $|S| \geq \sfrac{n}{(j \cdot k)} \geq \sfrac{n}{(t \cdot k)}$ individuals. By the algorithm's design, this ball becomes complete, contradicting our assumption. Therefore, such a $v^*$ exists. 
\end{proof}
For all other individuals $v \in S$, the triangle inequality ensures that properties (a) and (b) hold with the bound:
$$r \cdot (1 + 2 \cdot (4^{\ell-t} - 1) + 2 + 2 \cdot (4^{\ell-t} - 1)) = r \cdot (4^{\ell-t+1} - 1).$$
Applying the inductive claim at index $j$, we obtain that there exists a group $G \in \G^j$ such that:
$$\min_{v \in S} d(c_G, v) \leq r \cdot (4^{\ell-j+1} - 1) \leq r \cdot 4^{\ell-j+1}$$
Since panel $P$ contains a representative from each group in $\G^j$, there exists $p \in P$ with $d(p, c_G) \leq \max_{y \in I(G)} d(c_G, y) \leq r \cdot 4^{\ell-j+1}$.
By the triangle inequality:
$$\min_{v \in S} d(v, p) \leq \min_{v \in S} d(v, c_G) + d(c_G, p) \leq r \cdot 4^{\ell-j+1} + r \cdot 4^{\ell-j+1} = r \cdot 2 \cdot 4^{\ell-j+1}$$
This establishes $O(4^{\ell-j})$-PFC.
\end{proof}

\subsection{Proof of Theorem~\ref{thm:exponential}}
\label{sec:proof-of-exponential}

We start by proving some auxiliary lemmas first. 

\begin{lemma}
    \label{lem:find-repr-terminates}
   \FR\ algorithm always terminates properly.
\end{lemma}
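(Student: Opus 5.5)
Our plan is to prove termination by strong induction on the depth of the target group in the current tree (the rooted structure maintained by \NestedBased, possibly flattened by earlier updates), and to prove in the same induction that the base case always finds enough leaves. The recursion in \FR\ only descends: when the target $G$ contains some subgroup $H^*\in\G^i$, the recursive call is on $H^*$, which is a proper descendant of $G$. The tree is finite, and updates only delete nodes or promote children upward, so depths never increase. Hence every chain of recursive calls has length at most the height of the tree, which is at most $\ell+1$, and the recursion must eventually reach the \textbf{else} branch. There, the only thing that can go wrong is that $G$ has fewer than $\sfrac{n}{\ell k}$ leaf children, so that the subset $Q$ of that size cannot be chosen.

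The key step is therefore an invariant. \emph{Every group node that is present in the tree at the moment \FR\ is called either still has a child group, or has at least $\sfrac{n}{\ell k}$ leaf (individual) children.} We would first verify it right after Phase 1. Every group produced by \MGC\ at level $t$ captures at least $\sfrac{n}{tk}\ge \sfrac{n}{\ell k}$ individuals, counted through $I(\cdot)$. So a group without child groups consists only of individuals, and there are at least $\sfrac{n}{\ell k}$ of them. This handles all groups of $\G^\ell$. For $t<\ell$, a group of $\G^t$ with no child group likewise consists only of singletons, at least $\sfrac{n}{tk}$ many.

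Next we would check that the invariant survives each tree update. In an update, the sampled set $Q$ is removed from the terminal group, the intermediate groups on the trajectory are deleted, and their remaining children are attached to the root group $G$ of the trajectory. The terminal group itself disappears, since it is an intermediate group or it is $G$. If $G$ loses all of its child groups this way, we must show that enough individuals remain directly under it. To do this we would track a counting quantity: each group of $\G^t$ starts with at least $\sfrac{n}{tk}$ individuals in $I(G)$, and each sampling step removes exactly $\sfrac{n}{\ell k}$ of them. Phase 2.1 calls \FR\ on $G\in\G^1$ exactly $\ell$ times, consuming at most $\ell\cdot\sfrac{n}{\ell k}=\sfrac{n}{k}\le |I(G)|$ individuals. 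The subtrees set aside in line~\ref{line:ensures-excess-individuals} become children of $\Rr$ with their own invariant intact, since they were untouched.

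The main obstacle is the counting in that last argument. Removing the deferred subgroups $H$ lowers $|I(G)|$, so a bound of the form ``$|I(G)|\ge \sfrac{n}{\ell k}$ before each call'' must still be shown. We would try to show that a deferred $H\in\G^i$ with $i<t$ is removed only at step $t$, and to charge its individuals against the $\ell-t$ remaining calls, using $|I(H)|\ge \sfrac{n}{ik}$. If that fails, we would fall back on the weaker claim this lemma actually needs. Any node reached with no child groups was either an original leaf-only group or the root of an updated trajectory, and in the latter case the flattening keeps at least one unsampled individual. That would leave only the size requirement, which we would attempt to enforce by having $Q$ take all remaining leaves when fewer than $\sfrac{n}{\ell k}$ remain.
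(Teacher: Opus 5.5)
\textbf{Verdict: there is a genuine gap.} You correctly find where termination can fail. Every group strictly below the target in Phase~2.1 is an original, never-modified group, so \FR\ can only break when the modified $G\in\G^1$ has no child groups left and fewer than $\sfrac{n}{\ell k}$ individuals. But showing this never happens is the entire content of the lemma, and your proposal leaves it open.

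The estimate $\ell\cdot\sfrac{n}{\ell k}\le |I(G)|$ ignores the subtrees deferred at line~\ref{line:ensures-excess-individuals}. Your suggested repair also runs in the wrong direction. A deferred $H\in\G^i$ leaves $G$, so the bound $|I(H)|\ge\sfrac{n}{ik}$ only measures what is lost; you need a lower bound on what provably stays.

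The missing idea, which is how the paper argues, is structural:
\begin{itemize}
\item Suppose some $H\in\G^i$ with $i<t$ is still inside $G$ at step $t$. Then at every earlier step $s<t$, the ancestor of $H$ that was a direct child of $G$ had index in $[s,t-1]$; otherwise it would already have been deferred, taking $H$ with it. So the min-index rule forced the trajectory $L_s$ through such a group.
\item Let $H_s$ be the deepest group on $L_s$ with index at most $t-1$. All subgroups of $H_s$ have index at least $t$, or \FR\ would have descended into one.
\item Hence the leftovers $I(H_s)\setminus Q_s$ are flattened into $G$. They are untouched by later calls, whose samples lie in the other $H_{s'}$, and immune to the while-loop at step $t$.
\item By laminarity the $H_s$ are pairwise disjoint, and each contains at least $\sfrac{n}{(t-1)k}$ individuals.
\end{itemize}
So after the while-loop at step $t$,
\[
|I(G)|\;\ge\;(t-1)\cdot\frac{n}{(t-1)k}-(t-1)\cdot\frac{n}{\ell k}\;=\;(\ell-t+1)\cdot\frac{n}{\ell k}.
\]
This quantitative bound is what carries the induction. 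Your qualitative invariant (``a child group or at least $\sfrac{n}{\ell k}$ leaves'') is not preserved on its own.

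Your fallback does not prove the lemma. Letting $Q$ absorb all remaining leaves when fewer than $\sfrac{n}{\ell k}$ remain changes the algorithm. It also breaks Lemma~\ref{lem:fairness}, which needs the $\ell k$ sample groups to have size exactly $\sfrac{n}{\ell k}$ and to partition $V$. The claim that flattening keeps at least one unsampled individual is also unsupported.

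Finally, you never treat Phase~2.2. There it suffices to note two facts. Every group under $\Rr$ is an unmodified original group. And $|I(\Rr)|$ stays a multiple of $\sfrac{n}{\ell k}$, since each call removes exactly that many. So $\Rr$ holds at least $\sfrac{n}{\ell k}$ individuals whenever it is nonempty and has no child groups.
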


\begin{proof}
    We prove \FR\ terminates in both Phase~2.1 and Phase~2.2 of the \NestedBased\ algorithm.
    
    During Phase~2.1, the algorithm processes each group $G \in \G^1$ by executing \FR\ exactly $\ell$ times (once for each panel). Each execution removes groups in the execution trajectory and a sampling group of size $\sfrac{n}{(\ell \cdot k)}$ from $G$. Importantly, while the contents of $G$ change, the groups in $\bigcup_{i=2}^\ell \G^i$ maintain their structure through out the process, as they are only removed from the hierarchy, not modified internally.
    
    First, we establish a baseline: when \FR\ is executed on any group whose content has not been modified, it must terminate. This follows because every recursive path eventually reaches a terminal group containing at least $\sfrac{n}{(\ell \cdot k)}$ individuals, which is sufficient to create the required sampling group.
    
    Now suppose, for contradiction, that \FR\ fails to terminate when executed on group $G$ during Phase~2.1. Since \FR\ would succeed on any unmodified subgroup (by our baseline observation), the failure can only occur if $G$ contains only individuals (no subgroups) and $|I(G)| < \sfrac{n}{(\ell \cdot k)}$.
    
    Let $t^*$ be the first iteration where, after executing \FR, we have $|I(G_{t^*})| < (\ell-t^*) \cdot \sfrac{n}{(\ell \cdot k)}$. Here, $G_{t^*}$ denotes the content of $G$ after completing iteration $t^*$. Such a $t^*$ must exist since we assumed $|I(G)| < \sfrac{n}{(\ell \cdot k)}$ eventually. By the minimality of $t^*$, after iteration $t^*-1$ we have:
    $$|I(G_{t^*-1})| \geq (\ell-t^*+1) \frac{n}{\ell \cdot k}.$$
    
    Since each execution of \FR\ removes exactly $\sfrac{n}{(\ell \cdot k)}$ individuals from $G$, and we know that $|I(G_{t^*-1})| - \sfrac{n}{(\ell \cdot k)} < (\ell-t^*) \sfrac{n}{(\ell \cdot k)}$, the deficit at iteration $t^*$ must trigger the while-loop condition at line~\ref{line:ensures-excess-individuals}. This while-loop removes groups from $G$ that belong to $\G^i$ for $i < t^*$. Let $H_{t^*}$ be one such group removed during this while-loop.
    
    Let us denote by $G'_{t^*}$ the content of $G$ at iteration $t^*$ right after the while-loop completes but before calling \FR. This is the critical state we need to analyze.
    
    Let $L_1, \ldots, L_{t^*-1}$ be the collection of trajectories followed while executing \FR in iterations $1$ through $t^*-1$ respectively, with corresponding sampling groups $Q_1, \ldots, Q_{t^*-1}$. For each trajectory $L_i$, let $H_i$ be the group on $L_i$ that belongs to $\G^{t_i}$ where $t_i$ is maximal subject to $t_i \leq t^*-1$.
    
    We establish two crucial properties:
    \begin{enumerate}
        \item For any $i \in [t^*-1]$, the group $H_i$ cannot contain any group from $\G^t$ for $t_i \leq t \leq t^*-1$. If it did, \FR\ would have recursed through that group, contradicting the maximality of $t_i$ for $H_i$.
        
        \item The groups $H_1, \ldots, H_{t^*-1}, H_{t^*}$ are mutually disjoint. To see why, note that if $I(H) \cap I(H') \neq \emptyset$ for two groups in our collection, then by the hierarchical structure, either $H \subseteq H'$ or $H' \subseteq H$. But property (1) prevents any $H_i$ from containing another $H_j$ when $i, j \leq t^*-1$. In addition, $H_{t^*} \in \G^i$ for $i < t^*$ ensures it cannot be contained in any $H_j$ and $H_{t^*}$ cannot contain any $H_j$'s as they appear in some other trajectories implying that $H_{t^*}$ cannot be their ancestor in the hierarchy.
    \end{enumerate}

    Property (1) has an important consequence: the individuals in $\bigcup_{i=1}^{t^*-1} (I(H_i) \setminus Q_i)$ cannot be removed during the while-loop at iteration $t^*$. This is because these individuals either belong to groups in $\G^t$ for $t \geq t^*$ (and thus don't satisfy the while-loop condition) or are direct members of $G$ not belonging to any subgroup.
    
    Now we can lower bound $|I(G'_{t^*})|$. Using property (2) and the fact that each $H_i$ contains at least $\frac{n}{(t^*-1) \cdot k}$ individuals:
    \begin{align*}
    |I(G'_{t^*})| &\geq \left|\bigcup_{i=1}^{t^*-1} (I(H_i) \setminus Q_i)\right|\\
    &= \sum_{i=1}^{t^*-1} |I(H_i) \setminus Q_i| \quad \text{(by disjointness of the $H_i$'s)}\\
    &= \sum_{i=1}^{t^*-1} (|I(H_i)| - |Q_i|)\\
    &\geq (t^*-1) \cdot \frac{n}{(t^*-1) \cdot k} - (t^*-1) \cdot \frac{n}{\ell \cdot k}\\
    &= \frac{n}{k} - \frac{(t^*-1) \cdot n}{\ell \cdot k}\\
    &= (\ell - t^* + 1) \cdot \sfrac{n}{(\ell \cdot k)}
    \end{align*}
    But this means $G'_{t^*}$ contains at least $(\ell - t^* + 1) \cdot \sfrac{n}{(\ell \cdot k)}$ individuals. Therefore, \FR\ at this step can terminate properly and removes $\sfrac{n}{(\ell \cdot k)}$ individuals implying that $|I(G_{t^*})| \geq (\ell - t^*)\frac{n}{\ell \cdot k}$, contradicting our assumption.
    
    During Phase~2.2, a similar counting argument shows that the root $\Rr$ maintains $|I(\Rr)| \geq \sfrac{n}{(\ell \cdot k)}$ until it becomes empty. Moreover, whenever $\Rr$ contains groups (not just individuals), \FR\ can recurse through these groups and terminate successfully. This completes the proof.
\end{proof}

\begin{corollary}
    \label{cor:panel-level-repr}
    For any panel $P_i$ and any group $G \in \G^1$, we have $P_i \cap I(G) \neq \emptyset$.
\end{corollary}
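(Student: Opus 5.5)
The corollary follows by reading off Phase~2.1 of \NestedBased\ together with Lemma~\ref{lem:find-repr-terminates}. Fix a group $G \in \G^1$ and a panel index $i \in [\ell]$. In Phase~2.1 the outer loop processes $G$, and the inner loop at line~\ref{line:ell-step-loop} runs over $t=1,\dots,\ell$. At iteration $t=i$, the algorithm first executes the while-loop at line~\ref{line:ensures-excess-individuals}. It then calls $\FR(G)$ on the current content of $G$ and assigns a uniformly sampled $v \in Q$ to $P_i$. By Lemma~\ref{lem:find-repr-terminates}, this call terminates properly and returns a sampling group $Q$ of size $\sfrac{n}{(\ell \cdot k)} \geq 1$. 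So a representative is indeed placed in $P_i$, and it remains to show $v \in I(G)$ for the \emph{original} group $G$.

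For this I would argue an invariant. Throughout the processing of $G$, the current content of $G$ only ever loses material or is flattened. The while-loop removes subgroups $H$ (moving them to $\Rr$). \FR\ removes the sampled set $Q$. The update $G \leftarrow G'$ replaces an intermediate group $H^*$ by its modified version $H'$, whose individuals satisfy $I(H') \subseteq I(H^*)$. Each of these steps is shown by a one-line induction on the recursion depth of \FR, since at the base case it returns $G\setminus Q$ with $Q \subseteq G$. Hence at every iteration the individuals of the current $G$ form a subset of $I(G)$ for the original $G \in \G^1$. Moreover, the sampling group returned by $\FR(G)$ satisfies $Q \subseteq I(G_{\text{current}})$, because \FR\ only descends into children of its argument. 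Hence $v \in Q \subseteq I(G)$, and $v \in P_i$.

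Two details need checking. First, the individuals assigned in Phase~2.1 must not be later removed or reassigned; this is immediate, since panels only receive additions. Second, the capacity of $P_i$ must not overflow, which is the less trivial point. Phase~2.1 assigns exactly one individual per group of $\G^1$ to each panel, so the question is whether $|\G^1| \leq k$. This holds because the groups in $\G^1$ are disjoint, each contains at least $\sfrac{n}{k}$ individuals, and $|V| = n$. The main step is really Lemma~\ref{lem:find-repr-terminates}, which is assumed; the rest is bookkeeping of the invariant $I(G_{\text{current}}) \subseteq I(G)$.
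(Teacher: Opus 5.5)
Your proposal is correct and takes the same route as the paper. The paper's proof is the one-line observation that the corollary follows from Lemma~\ref{lem:find-repr-terminates}, since \FR\ is executed once per panel for every $G \in \G^1$ in Phase~2.1; your invariant that the current content of $G$ stays inside $I(G)$ (so the sampled $v$ lies in the original group) and your capacity check $|\G^1| \le k$ spell out bookkeeping the paper leaves implicit.
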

\begin{proof}
    This corollary follows from Lemma~\ref{lem:find-repr-terminates}, as \FR\ is executed $\ell$ times for each group $G \in \G^1$.
\end{proof}

\begin{lemma}
    \label{lem:prefix-level-repr}
    For any cumulative panel $P_{\leq i}$ and any group $G \in \G^i$, we have $P_{\leq i} \cap I(G) \neq \emptyset$.
\end{lemma}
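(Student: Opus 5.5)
Fix $i\in[\ell]$ and a group $G\in\G^i$. I will show that some individual of $I(G)$ is placed in one of $P_1,\dots,P_i$, by following where the algorithm sends $G$.

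\emph{Case 1: $G$ is handled in Phase~2.1.} Here $G$ sits inside some $G^1\in\G^1$. The while-loop at line~\ref{line:ensures-excess-individuals} in iteration $t$ removes every remaining subgroup of $G^1$ lying in $\G^{i'}$ with $i'<t$. So if $G$ has received no representative by the start of iteration $i+1$, it is either detached to $\Rr$ at that point or was detached earlier. Otherwise $G$ was represented during iterations $1,\dots,i$. This uses that \FR\ passes through $G$ exactly when it lies on the trajectory, and then the sampled $v\in Q\subseteq I(G)$ goes to $P_t$ with $t\le i$. One subtlety needs care. When \FR\ passes through a subgroup, the tree update flattens that subgroup into $G^1$. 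A group lying on a trajectory is therefore represented at that moment. A group not on the trajectory keeps its identity and its set $I(\cdot)$, and stays eligible for later detachment.

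\emph{Case 2: $G$ reaches Phase~2.2 unrepresented,} either detached or left over as a child of $\Rr$. Phase~2.2 fills panels in order, always choosing the lowest-index panel with space. It remains to show that a representative of $G$ is sampled before $P_{\le i}$ is full, i.e.\ while fewer than $ik$ seats are occupied.

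The plan is a counting argument. \FR\ on $\Rr$ always descends into the pending group with the minimum index. Consider the moment $P_{\le i}$ would fill with $G$ still pending. Then every one of the $ik$ seats in $P_{\le i}$ was taken either by a Phase~2.1 representative from the first $i$ iterations or by a Phase~2.2 pick whose trajectory's minimal-index group lies in some $\G^{j}$ with $j\le i$. I would charge each seat to a distinct group of $\G^{\le i}$, and compare the number of charged groups with the capacity $ik$.

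The comparison rests on the fact that groups of $\G^i$ are disjoint and each has size at least $n/(ik)$, so $|\G^i|\le ik$. Groups in $\G^j$ for $j<i$ contain groups of $\G^i$ or individuals, by the nested construction of Phase~1. Representing a group of $\G^{j}$ through its minimal-index path therefore also represents a group of $\G^{j'}$ nested inside it, for each larger $j'$ along that path. I would aim to show an injection from the used seats of $P_{\le i}$ into the groups of $\G^i$, so that one unit of capacity is consumed per group of $\G^i$. That would leave a free slot for $G$.

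The main obstacle is making this injection rigorous. A Phase~2.1 pick or a minimal-index pick may have a trajectory that contains no $\G^i$ group; its terminal group might be a group of $\G^{j}$ with $j<i$ that contains only individuals. I would handle this by instead charging to disjoint sets of $n/(\ell k)$ sampled individuals, or by a volume argument. I would mirror the disjointness argument of Lemma~\ref{lem:find-repr-terminates}: take the maximal-index group at most $i$ on each trajectory and show these groups are pairwise disjoint. Since they are all distinct from $G$ and disjoint from it, counting sizes shows there are fewer than $ik$ of them together with $G$. The bound I expect to use is that each such group has at least $n/(ik)$ individuals, against $n$ total, with $G$ itself occupying $n/(ik)$ of them.
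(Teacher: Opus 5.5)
There is a genuine gap in your charging step for the Phase~2.1 seats. Your Case~1 analysis and the reduction to counting at the moment $P_{\le i}$ fills are fine. The Phase~2.2 part of your plan also works: while $G$ is pending, every trajectory starts in a group of index at most $i$. The maximal-index group of index at most $i$ on such a trajectory is a still-intact group with no descendants of index $\le i$. These groups are pairwise disjoint, disjoint from $G$, and each has at least $n/(ik)$ individuals.

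For Phase~2.1 seats, the maximal-index group of index at most $i$ on the trajectory is often the $\G^1$ group itself, for example whenever it has no subgroups of index in $\{2,\dots,i\}$. That one group is shared by all $i$ of its picks in $P_{\le i}$, and it shrinks as it is used. So you cannot charge the $i|\G^1|$ Phase~2.1 seats to pairwise disjoint groups of size $n/(ik)$, and the injection into $\G^i$ fails for the same reason. Your fallback, charging each Phase~2.1 seat in $P_{\le i}$ to its own sample group, yields only $i|\G^1|\cdot n/(\ell k)$ individuals. That is short by a factor $\ell/i$.

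The missing idea, which is what the paper uses, is to pay for the Phase~2.1 seats collectively with the sample groups of \emph{all} $\ell$ iterations of Phase~2.1. These are $\ell|\G^1|$ disjoint sets of size $n/(\ell k)$, with total size $|\G^1|\,n/k = i|\G^1|\cdot n/(ik)$. They are disjoint from every group that survives Phase~2.1 intact, in particular from $G$ and from all groups you charge in Phase~2.2. Since Phase~2.1 puts exactly $|\G^1|$ individuals in every panel, $P_{\le i}$ has exactly $i(k-|\G^1|)$ Phase~2.2 seats. If it fills with $G$ still pending, you get
\[
n \;\ge\; \ell|\G^1|\cdot\frac{n}{\ell k} + \bigl(i(k-|\G^1|)+1\bigr)\frac{n}{ik} \;=\; n+\frac{n}{ik},
\]
a contradiction. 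With this accounting for Phase~2.1, your disjointness argument for the Phase~2.2 trajectories completes the proof.
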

\begin{proof}
    For any index $t$, if a group $G \in \G^t$ is removed from the tree during the second phase of \NestedBased, by construction an individual $v \in I(G)$ is assigned to one of the first $t$ panels.

    Assume for contradiction that there exists a group $G \in \G^t$ not represented by the first $t$ panels, and assume $t$ is the minimum such index. When $G$ is removed from the tree, $G$ must belong to $\Rr$ and \FR\ will provide a sample group belonging to $I(G)$. \NestedBased\ can only violate the representation guarantee if the first $t$ panels are completed.

    Note that at the end of Phase~2.1 (Panel Representation), each panel contains exactly $|\G^1|$ individuals. Moreover, since the first $t$ panels are full, there must be at least $t \cdot (k-|\G^1|)+1$ groups belonging to $\G^t$ existing in the tree at the beginning of Phase~2.2 (Prefix Representation). Since $\G^t$ groups are disjoint and sample groups obtained by running \FR\ during Phase~2.1 are disjoint from these $\G^t$ groups, we obtain
    $$|V| \geq \frac{n}{\ell \cdot k} \cdot \ell \cdot |\G^1| + (t \cdot (k-|\G^1|+1)) \cdot \frac{n}{t \cdot k} \geq n + \frac{n}{t \cdot k} > n.$$
    This yields a contradiction.
\end{proof}

\begin{lemma}
    \label{lem:fairness}
    The output panels of \NestedBased\ satisfy individual fairness.
\end{lemma}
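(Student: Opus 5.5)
The plan is to show that the random choices of \NestedBased\ are structured so that every individual lies in exactly one sampling group $Q$, and is then picked from it with probability $\sfrac{|Q|}{\cdot}$-normalized, i.e.\ $\sfrac{\ell \cdot k}{n}$. Phase~1 is deterministic: \MGC\ and the tree updates do not depend on randomness. The key point for Phase~2 is that the \emph{structure} of the execution does not depend on which individual is sampled from each $Q$. \FR\ picks $Q$ by tree structure alone. The removal of $Q$ from the tree happens regardless of which $v\in Q$ was sampled, and so does the flattening of the trajectory and the relocation of groups in the while-loop at line~\ref{line:ensures-excess-individuals}. Hence the panel index chosen in Phase~2.2 depends only on counts, not on identities. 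We may therefore fix the whole sequence of sampling groups $Q_1, Q_2, \dots, Q_m$ (with their arbitrary tie-breaking resolved deterministically) independently of the samples drawn, and the randomness consists only of independent uniform choices $v_i \in Q_i$.

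Next we argue that the $Q_i$ partition $V$, and that each has size exactly $\sfrac{n}{\ell k}$; we take $\sfrac{n}{\ell k}$ to be an integer, as is implicit in the algorithm. Disjointness holds because each $Q_i$ is removed permanently from the tree once sampled. For coverage, note that the algorithm only terminates when $\Rr=\emptyset$. Every individual is a leaf of some group that is eventually either processed in Phase~2.1 and returned to $\Rr$, or already in $\Rr$. Hence every individual is eventually removed, and the only removal mechanism is inclusion in a sampling group. This uses Lemma~\ref{lem:find-repr-terminates}, which guarantees that each call to \FR\ succeeds and returns a full-size $Q$. Counting then gives $m = n / (n/(\ell k)) = \ell k$, so every panel seat is filled exactly once and each $v$ is selected with probability $1/|Q_{i(v)}| = \sfrac{\ell k}{n}$.

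The main obstacle is the coverage and exact-size claim near the end of Phase~2.2. The root may hold fewer than $\sfrac{n}{\ell k}$ leftover individuals, or fewer than $\ell k$ samples may be needed for the panels. I would first try to leverage the counting remark in the proof of Lemma~\ref{lem:find-repr-terminates}, namely that $|I(\Rr)|$ stays at least $\sfrac{n}{\ell k}$ until empty. Combined with the invariant that $|I(\Rr)|$ plus all removed individuals always equals $n$, and that each step removes exactly $\sfrac{n}{\ell k}$, this forces $|I(\Rr)|$ to be a multiple of $\sfrac{n}{\ell k}$ at every stage. Phase~2.2 then ends exactly when $\ell k$ samples in total have been drawn, which matches the total panel capacity $\ell k$, so the assignment to the minimal non-full $P_t$ never overflows.

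If some individuals could remain unsampled, for instance because the population size were not a multiple, the fallback would be to argue that the uniform choice within each $Q$ still gives probability $\sfrac{\ell k}{n}$ after a symmetric random padding. The divisibility-based argument above is the primary route.
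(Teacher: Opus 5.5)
Your proposal is correct and follows the paper's argument: the $\ell k$ calls to \textsc{FindRepresentative} produce sampling groups of size $n/(\ell k)$ that partition $V$, one individual is drawn uniformly from each, so every individual is selected with probability $\ell k/n$. What you add fills in details the paper leaves implicit: the sampling groups are fixed independently of the random draws, and the number of remaining individuals is always a multiple of $n/(\ell k)$, so the groups have exact size, cover $V$, and never overflow the $\ell k$ seats.
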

\begin{proof}
    \FR\ is executed a total of $\ell \cdot k$ times, outputting a sample group $Q$ of size $\frac{n}{\ell \cdot k}$ after each execution. Sample groups from these executions form a partition of $V$. Since the algorithm selects uniformly random individuals from each sample group, it ensures every individual appears in some panel with probability $\frac{\ell \cdot k}{n}$.
\end{proof}

\begin{proof}[Proof of \cref{thm:exponential}]
    Panel level representation guarantee follows from \Cref{lem:hierarchical} and \Cref{cor:panel-level-repr}. Prefix level representation follows from \Cref{lem:hierarchical} and \Cref{lem:prefix-level-repr}. Individual fairness follows from Lemma~\ref{lem:fairness}.
\end{proof}

\section{Prefix Level Representation}\label{sec:prefix}

In the previous section, we showed that by nesting groups across panel sizes, we can ensure approximate PFC representation for both prefixes and individual panels. However,  the approximation factor increases exponentially with the number of panels $\ell$.

In this section, we adopt a different approach and show that a \emph{constant}-factor 
approximation to PFC can be achieved with respect to every prefix panel 
\( P_{\leq t} \), though at the cost of abandoning representation guarantees for 
individual panels. The key idea is to replace the nested-group construction with 
a distinct family of groups \( \mathcal{G}^t \) for each prefix \( t \in [\ell] \), 
capturing the groups that must be represented within the first \( t \) panels. 
Unlike the nested setting, this method does not require that for overlapping groups 
$G \in \mathcal{G}^t$ and $G' \in \mathcal{G}^{t'}$ with $t < t'$, the group from the 
larger prefix size ($G'$) be fully contained in the group from the smaller prefix size 
($G$), i.e., $G' \subseteq G$. Instead, the algorithm links groups across 
different prefix sizes into \emph{chains}, i.e.,  sequences of the form $ G^t \rightarrow G^{t+1} \rightarrow \cdots \rightarrow G^{\ell}$, with $G^j \in \G^j$,
where each group overlaps with some of its predecessor and has radius no larger than those that overlaps with. Representing the final group in such a chain then suffices to approximate the 
representation of all groups in the sequence, ensuring that coverage for the last 
group automatically yields coverage for every earlier one.

Note that simply linking overlapping groups with radii no larger than their predecessors is not enough, as the approximation error can accumulate at each step, leading to a bound that grows linearly with~$\ell$. To prevent this, we build the chains in a  more careful way.

\begin{theorem}\label{thm:constant}
    There exists a polynomial time algorithm that returns a distribution over a sequence of $\ell$ panels $P_1, \ldots, P_{\ell}$, where panel $P_t$ has size $k_t$,  such that:
    \begin{itemize}
         \item For each $t\in [\ell]$,  prefix panel $P_{\leq t}=\cup_{j=1}^t P_j$  satisfies $O(1)$-PFC;
         \item Each individual is selected in $\cup_{t\in [\ell]} P_{t}$ with probability $ \sfrac{ \sum_{t=1}^{\ell}k_t}{n}$. 
    \end{itemize}
     
\end{theorem}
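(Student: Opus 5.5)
The plan is to construct, for every prefix $t\in[\ell]$, a family $\G^t$ of disjoint groups by running greedy capture (the algorithm of \citet{chen2019proportionally}, equivalently \MGC\ with singleton groups) on $V$ with parameter $K_t=\sum_{j\le t}k_j$. Each group $G\in\G^t$ has at least $n/K_t$ members and lies inside the ball $B(c_G,r_G)$.

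The standard greedy-capture argument then gives the following. If every group of $\G^t$ contains a panel member of $P_{\le t}$, then $P_{\le t}$ is $O(1)$-PFC. Indeed, take any coalition $S$ with $|S|\ge n/K_t$ and radius $r$ around some $x\in S$. By the time $\delta=r$, some member of $S$ has been captured by a group $G$ with $r_G\le r$. A representative $p\in I(G)$ is then within $2r$ of that member. So it suffices to construct a random sequence of panels such that (i) for each $t$, every $G\in\G^t$ is hit by $P_{\le t}$, up to a constant-factor relaxation, and (ii) the selection is individually fair. The relaxation means hitting some group $G'$ with $r_{G'}\le r_G$ and $d(c_G,c_{G'})\le O(r_G)$.

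Unlike \Cref{thm:exponential}, the groups are not nested. I would therefore build \emph{chains} $G^t\to G^{t+1}\to\cdots\to G^\ell$ with $G^j\in\G^j$ and assign one representative per chain, drawn uniformly from a sampling set of size $n/K_\ell$ taken from its terminal group $G^\ell$. This yields exactly $K_\ell$ chains partitioning $V$ into sampling sets, and that partition gives individual fairness exactly as in \Cref{lem:fairness}.

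To avoid the linear error accumulation mentioned before the theorem, I would link carefully. Each $G\in\G^t$ is linked to a group $G'\in\G^{t+1}$ that intersects $B(c_G,r_G)$ and has $r_{G'}\le r_G$; such a group exists because greedy capture with the larger parameter $K_{t+1}$ captures some point of $I(G)$ by radius $r_G$. Among the candidates I would pick the one of minimal radius, or the one formed earliest. Chains are then processed in increasing order of radius, so that the radius along a chain is nonincreasing.

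To bound the drift I would use a geometric argument. Whenever a chain is extended through a group whose radius is comparable to that of the chain's head, the relevant quantity is charged to that head. Concretely, I would aim to show $d(c_{G^t},c_{G^j})\le r_{G^t}+2\sum_{i>t}r_{G^i}$. I would then force geometric decrease by \emph{skipping}: if $r_{G^{i+1}}>r_{G^i}/2$, reuse $G^i$'s own representative coverage rather than starting a new link. If the radii at least halve each time a new link is created, the sum is at most $O(r_{G^t})$, and the bound is constant.

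Assignment to panels comes next. Each group $G\in\G^t$ must be the head of (or be absorbed into) a chain whose representative lands in the first $t$ panels. A counting argument bounds the number of chains that need a representative by prefix $t$ by $|\G^t|\le K_t$, and there are exactly $K_t$ seats in $P_{\le t}$. Filling seats greedily by deadline (earliest-deadline-first) then succeeds by a Hall-type argument. Absorbed groups cost no extra seats because their representative is a nearby group's representative.

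The main obstacle I expect is the chain-linking step, namely guaranteeing both that distances stay within $O(r_{G^t})$ and that the number of distinct chains whose deadline is at most $t$ stays at most $K_t$. The disjointness-and-size counting of \Cref{lem:prefix-level-repr} handles the second condition only if distinct heads with deadline $\le t$ correspond to disjoint sets of $\ge n/K_t$ individuals not consumed by other sampling sets. Reconciling this with the halving rule is the delicate part. My first attempt would be to process groups over all $t$ jointly in increasing radius order, as one global greedy capture, and charge each chain to the earliest group it serves.
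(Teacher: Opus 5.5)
Your architecture matches the paper's. You run \MGC\ once per prefix size $K_t=\sum_{j\le t}k_j$, link groups via the fact in \Cref{lem:intersect-close}, build chains ending in $\G^\ell$, give each chain the deadline of its head's level, and fill seats earliest-deadline-first. However, the step you leave open is the heart of the proof, so there is a genuine gap. Your two mechanisms pull against each other. With a deterministic successor map $\G^i\to\G^{i+1}$ (minimal radius or earliest formed), the chains form a forest. Every chain through a given group of $\G^t$ then ends at the same terminal, so at most $|\G^t|\le K_t$ terminals have deadline at most $t$. But your drift bound $r_{G^t}+2\sum_{i>t}r_{G^i}$ then only gives $O(\ell)\cdot r_{G^t}$, because radii may shrink slightly at every step without ever halving. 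For ``skipping'' to help, later-level groups must be searched from the last \emph{anchor}, by applying \Cref{lem:intersect-close} to the anchor (legitimate since $K_i\le K_j$), not from the previous group. Otherwise a run of non-halving steps still accumulates drift. But then the next group depends on the chain's history, so two chains through the same group of $\G^t$ can reach different terminals, and your count by $|\G^t|$ breaks. The joint increasing-radius greedy you suggest at the end is not shown to restore either property.

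The missing idea is how the paper makes anchoring and counting compatible, in three steps.
\begin{enumerate}
\item[(i)] There is an invariant for \emph{every} group on a chain, not just its head. When $G^j$ is appended under anchor $H$, we have $r_{G^j}\le r_H$, and either $G^j$ becomes the anchor or $r_H\le 2r_{G^j}$. Each later group lies within twice the radius of the then-current anchor, and anchor radii halve, so the chain's terminal is within $O(r_{G^j})$ of $c_{G^j}$.
\item[(ii)] Heads are processed level by level, $t=1,\dots,\ell-1$. A chain built from an uncovered head $G^t$ is \emph{aborted} as soon as it meets an already covered group $G^j$, and the head piggybacks on $G^j$'s chain. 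That chain started at a level at most $t$, so its deadline suffices. Also $d(c_{G^t},c_{G^j})=O(r_{G^t})$ and $r_{G^j}\le r_{G^t}$, so by (i) its terminal is within $O(r_{G^t})$ of $c_{G^t}$; this is Case~2 of \Cref{lem:chain-pr-guarantee}.
\item[(iii)] The abort rule makes successful chains group-disjoint, and each successful chain of priority at most $t$ contains its own group of $\G^t$. Hence at most $|\G^t|\le K_t$ representatives have deadline at most $t$, which is exactly what earliest-deadline-first needs. No population-counting argument as in \Cref{lem:prefix-level-repr} is involved.
\end{enumerate}
Separately, your fairness step is imprecise. $\G^\ell$ may have fewer than $K_\ell$ groups, and its groups need not cover $V$, since points disregarded by an already complete ball join no group. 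You must fill the remaining $K_\ell-|\G^\ell|$ seats from $V\setminus\bigcup_{G\in\G^\ell}G$ with deadline $\ell$, as the paper does, rather than assert that $K_\ell$ chains partition $V$.
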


\noindent
\paragraph{Description of the Algorithm.}
Our algorithm, called \ChainBased\ consists of three main phases.

In the first phase, for each \( t \in [\ell] \), the algorithm runs \MGC\ on the population 
\( V \) with metric \( d \), panel size \( \sum_{j=1}^t k_j \), and with each individual 
initially assigned to their own singleton group (i.e., there is no pre-existing grouping 
that must be nested). This yields a collection of groups \( \mathcal{G}^t \), representing 
the  groups that should be represented among the first \( t \) panels.

In the subsequent phases, the algorithm aims to build panels $P_1,\ldots,P_\ell$ so that, for every prefix $P_{\leq t}$ and every group $G^t \in \G^t$, there exists an individual in $P_{\leq t}$ whose distance to $c_{G^t}$ is at most $\alpha r_{G^t}$ for a universal constant $\alpha \ge 1$. We say that a group is \emph{covered} once this condition holds, and the goal is to cover all groups in $\bigcup_{t\in[\ell]}\G^t$. 
To achieve this, the algorithm relies on the following key property of the groups generated by \MGC, which is formalized in the following lemma. 

\begin{lemma}
    \label{lem:intersect-close}
    Let $V$ be a population in a metric space with distance function $d$, and consider panel sizes $k_1 \leq k_2$. Let $\G$ and $\H$ be the outputs of \MGC\ with panel sizes $k_1$ and $k_2$, respectively. For every group $G \in \G$ with center $c_G$ and radius $r_G$, there exists a group $H \in \H$ with center $c_H$ and radius $r_H$ such that $d(c_H, c_G) \leq 2 \cdot r_G$ and $r_H \leq r_G$.
\end{lemma}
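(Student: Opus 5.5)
Fix $G \in \G$ with center $c_G$ and radius $r_G$. At the moment $G$ was formed, the ball $B(c_G, r_G)$ contained at least $\sfrac{n}{k_1}$ individuals. Since $k_1 \le k_2$, we have $\sfrac{n}{k_1} \ge \sfrac{n}{k_2}$. The plan is to look at the run of \MGC\ with parameter $k_2$, which produces $\H$, and stop it at time $\delta = r_G$. Throughout, all groups are singletons, so ``capture'' just means an individual lies in the ball. The whole proof rests on a standard greedy-capture dichotomy for $B(c_G, r_G)$ in that run.

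The first case is that some individual $y \in B(c_G, r_G)$ has been disregarded by time $\delta = r_G$. Then some group $H \in \H$ already exists with $r_H \le r_G$ (groups are only created at the current radius, which is at most $r_G$). This $H$ either absorbed $y$ when it was formed, or later disregarded $y$ while its ball was still growing. In both sub-cases $y \in B(c_H, \delta')$ for some $\delta' \le r_G$. By the triangle inequality,
$$d(c_H, c_G) \le d(c_H, y) + d(y, c_G) \le r_G + r_G = 2r_G,$$
and we already have $r_H \le r_G$.

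The second case is that no individual of $B(c_G, r_G)$ has been disregarded before $\delta = r_G$. Then at radius $r_G$ the ball around $c_G$ holds at least $\sfrac{n}{k_1} \ge \sfrac{n}{k_2}$ uncovered individuals. So the inner while-loop must create a group at this radius, unless such a group was created earlier or some ball at radius $\le r_G$ first grabbed some of these points. Whichever happens, a group $H$ with $r_H \le r_G$ is created that either is centered at $c_G$ (distance $0$) or contains a point of $B(c_G, r_G)$. The second possibility reduces to the first case's triangle inequality, again giving $d(c_H, c_G) \le 2r_G$.

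The step I expect to need the most care is the bookkeeping of ``disregarded'' individuals. In \MGC, a complete group $H$ keeps growing its ball after creation, so an individual can be disregarded by $H$ at a radius $\delta' > r_H$. I must make sure this happens at $\delta' \le r_G$, so that the distance bound uses $\delta'$ rather than $r_H$; I will check this against the continuous-time ordering of events in Algorithm~\ref{alg:MGC}. I also need to handle ties: several events can occur at the same $\delta$, and the order in which the inner while-loop fires matters. I would resolve both by taking $H$ to be the group responsible for the first individual of $B(c_G, r_G)$ to leave $U$, and arguing that this event happens at some radius $\delta' \le r_G$, with $r_H \le \delta'$. That single choice covers both cases uniformly.
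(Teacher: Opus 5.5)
Your proposal is correct and takes the same route as the paper. Both run the $k_2$ instance of \MGC\ up to $\delta = r_G$, argue that some point near $c_G$ must have left $U$ by then (otherwise the ball at $c_G$ would complete), and finish with the triangle inequality; the paper uses a point of $G$ where you use a point of $B(c_G,r_G)$. Your care in bounding $d(c_H,y)$ by the disregard radius $\delta' \le r_G$ rather than by $r_H$ is warranted, since the paper writes $d(c_H,v)\le r_H$ without accounting for a complete ball continuing to grow, though the bound $2r_G$ is unaffected.
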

\begin{proof}
Consider any group $G \in \G$ formed during the execution of \MGC\ with panel size $k_1$. When group $G$ is created, the radius threshold has reached $r_G$, and $G$ contains at least $\sfrac{n}{k_1}$ individuals that were previously not disregarded. Since $k_1 \leq k_2$, we have $\sfrac{n}{k_1} \geq \sfrac{n}{k_2}$, which implies $|G| \geq \sfrac{n}{k_2}$.

Consider the execution of \MGC\ with panel size $k_2$. Let $\delta$ denote the radius threshold at any point during this execution. We claim that when $\delta = r_G$, at least one individual in $G$ must be disregarded by some group in $\H$. If all individuals in $G$ remain available at this point, then the ball centered at $c_G$ would capture all $|G| \geq \frac{n}{k_2}$ individuals and become complete, forming a new group in $\H$.

Let $v \in G$ be an individual that is disregarded by some group $H \in \H$ when $\delta = r_G$. Since $H$ captures $v$ at radius threshold $r_G$, we have $r_H \leq r_G$. Moreover, by the triangle inequality:
$$d(c_H, c_G) \leq d(c_H, v) + d(v, c_G) \leq r_H + r_G \leq r_G + r_G = 2 \cdot r_G.$$
This proves the lemma.
\end{proof}

The above lemma indicates that for every group $G^t \in \G^t$  and every  $j\geq t$, there exists a group $G^{j} \in \G^{j}$ such that (i)  $G^t$ and $G^{j}$ overlap and 
 (ii) the  radius of $G^{j}$ is at most equal to the radius of $G^t$.  Essentially this means that by (approximately) representing group $G^{j}$, then $G^t$ is also (approximately) represented.

By exploiting this lemma, the second phase proceeds as follows. Initially,  all the groups are marked as uncovered.    For each group $G^t$, the algorithm attempts to find a group in $\mathcal{G}^\ell$ such that once this group is represented, $G^t$ is also approximately represented. More precisely, the algorithm starts from the set $\mathcal{G}^t$ with uncovered groups of smallest index and picks an arbitrary uncovered group $G^t$. It then calls the subroutine \findrepr, which tries to construct a chain of groups $G^t\rightarrow \ldots\rightarrow G^\ell$, with each $G^j \in \mathcal{G}^j$, ensuring that selecting an individual from $G^\ell$ represents every group in the chain within a constant approximation. The subroutine begins by marking $G^t$ as the \emph{anchor} and uses it to guide the chain extension. For each $j \in \{t+1,\ldots,\ell\}$, it finds a group $G^j \in \mathcal{G}^j$ that satisfies the conditions of Lemma~\ref{lem:intersect-close} relative to the anchor. If such a group is found and is uncovered, it is added to the chain; furthermore, if the radius of $G^j$ is at most half that of the current anchor, $G^j$ is promoted as the new anchor. This process continues until either a group in $\mathcal{G}^\ell$ is reached, at which point the chain construction succeeds, or a covered group is encountered, in which case \findrepr\ returns an unsuccessful chain consisting only of $G^t$. \ChainBased\  marks all groups in the returned chain as covered and repeats the process, always by calling \findrepr\ on an uncovered group in $\G^j$ with the smallest index $j$. The phase ends once all groups are marked as covered.

\begin{algorithm}[!hbt]
    \label{alg:temporal_sortition}
    \caption{\ChainBased}
        
    \begin{algorithmic}[1]\label{alg:constant}
      \STATE     \textbf{Input:}  $V$,  $d$, Panel sizes $k_1, \ldots, k_{\ell}$
    
     \STATE   \textbf{Output:}  A distribution over a sequence of panels $P_1,\ldots, P_{\ell}$ 
   \STATE \textbf{--- Phase 1: Construct Groups  ---}
        \FOR{level $t = 1$ to $\ell$}
            \STATE  $\G^t\leftarrow$  \MGC$( V, d, \sum_{j=1}^t k_j, \union_{v\in V}\{v\})$
        \ENDFOR
        
        \STATE \textbf{ ---Phase 2: Build Chains and Assign Priorities ---}
        \FOR{$t=1$ to $\ell-1$ } 
            \FOR{ each uncovered $G^t\in \G^t$}
                \STATE $\status$, $T$ $\leftarrow$ \findrepr($G^t$, $\{\G^1, \ldots, \G^\ell\}, V, d$ )
                \STATE Mark all groups in chain $T$ as covered
                    \STATE If status is succeed, assign to the last group in $T$ priority $t$
            
            \ENDFOR
        \ENDFOR

        \STATE \textbf{---Phase 3:  Sampling and Assignment to Panels---}
        \FOR{each $G^\ell \in \G^\ell$}
            \STATE Sample  $v$ from $G^\ell$ uniformly at random
            \STATE Assign $v$ the priority label of its group $G^\ell$
        \ENDFOR
    \STATE Sample $\sum_{t=1}^{\ell} k_t - |\G^\ell|$ representatives uniformly at random from $V \setminus \bigcup_{G \in \G^\ell} G$, and assign each a priority label of $\ell$. \label{alg-line-sampling}
    \STATE Assign representatives to panels by ordering them in increasing order of priority labels, and placing each into the earliest panel $P_t$ with available capacity (i.e., the smallest such index $t$)
        \RETURN $P_1, \ldots, P_\ell$
    \end{algorithmic}
\end{algorithm}

In the third phase, the algorithm uses the previously constructed chains to generate a distribution over panels \( P_1, \ldots, P_\ell \). To ensure individual fairness, it begins by sampling \( \sum_{t=1}^\ell k_t \) individuals from the population as follows: one individual is selected uniformly at random from each group in \( \mathcal{G}^\ell \), and the remaining \( \sum_{t=1}^\ell k_t  - |\mathcal{G}^\ell| \) individuals are sampled uniformly at random from the rest of the population\footnote{When $n$ is divisible by $\sum_{t=1}^\ell k_t$, this can be achieved by applying standard dependent rounding techniques. Otherwise, we first create groups of size exactly $\,\sfrac{n}{\sum_{t=1}^\ell k_t}$ by fractional assignment, and then apply Birkhoff's decomposition, as in~\cite{Ebadian2025}.} Next, to ensure proportional representation, the algorithm assigns each sampled individual a priority label based on the chains created in the previous phase. In particular, if an individual is sampled from a group $G^\ell$ that belongs to some chain $G^t\rightarrow \ldots \rightarrow G^\ell$, the individual is assigned the index $t$, corresponding to the head of the chain. If the individual is not sampled from any such group, they are assigned the priority label $\ell$.
Finally, the algorithm allocates individuals to panels sequentially, from $P_1$ to $P_\ell$, in increasing order of their assigned priority labels. Intuitively, this gives higher priority to individuals representing groups that must be covered earlier, ensuring that such representatives appear sooner in the panel sequence.

\begin{algorithm}[!hbt]
    \caption{\findrepr}\label{alg:find_representative}

    \begin{algorithmic}[1]

 \STATE      \textbf{Input:} Target group $G^t \in \G^t$,  $\{\G^1, \ldots, \G^\ell\}$, $V$, $d$

  \STATE   \textbf{Output:} status (succeed when a new chain was constructed and failed otherwise) and chain $T$
       
        \STATE Initialize the chain and anchor: $T \leftarrow G^t$ and $H \leftarrow G^t$

        \FOR{ $j = t+1$ to $\ell$}
            \STATE 
                Let $G^j \in \G^j$ be such that $d(c_{H}, c_{G^j})\leq 2\cdot r_H$ and $r_{G^j}\leq r_H$        
            \IF{$G^j$ is already covered}
                \STATE Remove from $T$ any group except for $G^t$ 
                \RETURN $fail, T$
            \ENDIF

            \STATE Add to $G^j$ at the end of chain $T$
            
            \IF{${r_{G^j}} < r_H/2$ \textbf{(radius shrinks significantly)}}
                \STATE Update  anchor: $H \leftarrow G^j$
            \ENDIF
        \ENDFOR
        
        \RETURN $succeed, T$
    \end{algorithmic}
\end{algorithm}

Before proving \Cref{thm:constant}, we first prove two more auxiliary lemmas.

\begin{lemma}
    \label{lem:chain-pr-guarantee}
    For every group $G^t \in \G^t$, there exists a representative $v$ among the first $t$ panels, such that $d(c_{G^t}, v) \leq 16 \cdot r_{G^t}$.   
\end{lemma}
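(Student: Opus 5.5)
The proof splits into two parts. First, a geometric bound: every group $G^t$ that gets marked covered is within distance $16\, r_{G^t}$ of some group $G^\ell \in \G^\ell$ whose sampled individual carries priority label at most $t$. Second, a capacity argument: every individual with label at most $t$ is placed in $P_{\leq t}$. The representative is the individual sampled from that terminal group $G^\ell$. Since it lies in $G^\ell$, it is within $r_{G^\ell}$ of $c_{G^\ell}$.

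\emph{Geometric step.} Take a successful chain $G^s = H_0, H_1, \ldots, H_m$ of anchors, where $H_{i+1}$ becomes the anchor because $r_{H_{i+1}} < r_{H_i}/2$. Every chain element $G^j$ attached while $H_i$ is the anchor satisfies $d(c_{H_i}, c_{G^j}) \leq 2 r_{H_i}$ and $r_{G^j} \leq r_{H_i}$. In particular $d(c_{H_i},c_{H_{i+1}}) \leq 2 r_{H_i}$ and the radii decay geometrically. Summing along the anchor path gives $d(c_{H_i}, c_{H_m}) \leq 2 r_{H_i}(1+\tfrac12+\cdots) \leq 4 r_{H_i}$. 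The terminal group $G^\ell$ is either $H_m$ or was attached under anchor $H_m$, so $d(c_{H_m},c_{G^\ell})\le 2r_{H_m}$. Hence any $v\in G^\ell$ satisfies $d(c_{H_i},v)\le 4r_{H_i}+2r_{H_m}+r_{H_m}\le 7r_{H_i}$.

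For a non-anchor element $G^j$ attached under $H_i$, we cannot compare $r_{H_i}$ with $r_{G^j}$ directly, because $r_{G^j}$ may be much smaller. However, $G^j$ was not promoted, so $r_{G^j} \geq r_{H_i}/2$, i.e. $r_{H_i}\le 2r_{G^j}$. Then $d(c_{G^j}, v) \leq 2 r_{H_i} + 7 r_{H_i} = 9r_{H_i}\le 18 r_{G^j}$. I would tighten the constants (for instance, bound $d(c_{H_i},c_{G^\ell})$ more carefully) to reach $16$. A looser constant would still suffice for the $O(1)$ claim.

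\emph{Failed calls.} Here $G^t$ is marked covered because the chain met an already-covered group $G^j$ with $d(c_{H},c_{G^j})\le 2r_H$, $r_{G^j}\le r_H$. I expect this case to be the main obstacle. Because chains are started in increasing order of $t$, the covered $G^j$ ($j>t$) lies on a chain whose head has index $s\le t$. I would strengthen the invariant by induction over the order in which groups are covered: every covered $G^j$ has a terminal $G^\ell$ with label at most the current $t$ and $d(c_{G^j},v)\le c\, r_{G^j}$. Composing this with the failing chain's anchor path gives a bound of the form $d(c_{G^t},v)\le 4r_{G^t}+2r_H+c\,r_{G^j}$.

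The danger is that the constant compounds through repeated failures. To prevent this, I would record for each covered group the anchor radius bound, namely the radius of the anchor active when the group was covered. I would then show that the anchor path of the failing chain followed by the anchor path of the chain that covered $G^j$ still has geometrically decreasing radii, since $r_{G^j}\le r_H$. This keeps the total a single geometric sum rather than an additive recursion.

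\emph{Capacity step.} The terminal groups with label at most $t$ are distinct groups of $\G^\ell$, one per successful chain whose head lies in $\G^1\cup\cdots\cup\G^t$. Since chain elements are marked covered, each group in $\G^1,\dots,\G^t$ heads at most one chain. I would bound the count by $\sum_{s\le t}$ (number of chains headed in $\G^s$). One candidate is $|\G^t|\le \sum_{j\le t}k_j$, since groups in $\G^t$ are disjoint of size at least $n/\sum_{j\le t}k_j$. That bound alone does not control chains headed at several levels $s\le t$ together. So I would argue that the heads at levels $\le t$ map injectively into distinct groups of $\G^t$: each chain passes through a distinct uncovered $\G^t$ element, or is headed at level $t$ itself. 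This gives at most $|\G^t|\le\sum_{j\le t}k_j$ labels $\le t$. The greedy assignment in increasing label order then places all of them in $P_{\le t}$.
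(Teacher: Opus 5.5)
\textbf{Failed calls.} Your architecture matches the paper's: a geometric bound along anchor chains, plus a seat-counting argument. The gap is the failed-call case, which you flag as the main obstacle and leave unresolved, and the remedy you sketch does not work. You claim that the failing chain's anchor path, followed by the anchor path of the chain containing $G^j$, still has geometrically decreasing radii. This is false. Let $A'$ be the anchor that was active when $G^j$ joined its own chain, and suppose $G^j$ was not promoted. Then you only know $r_{G^j}\le r_{A'}\le 2r_{G^j}$, so $r_{A'}$ can be as large as $2r_H$, and the radii can go up at the junction.

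What you are missing is that no recursion arises at all. A failed call marks only its head as covered, and every head processed so far, including the current one, lies at a level $\le t$. Since $j>t$, the covered group $G^j$ must be a non-head member of an earlier \emph{successful} chain. This is exactly what your observation that its chain's head has level $s\le t$ implies. So the chain-member bound applies to $G^j$ directly, and the failed-head bound follows in one step, with no invariant over the covering order.

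\textbf{Constants and capacity.} Your constants also miss the stated $16$. You get $18r_{G^j}$ already for chain members, and your composition $4r_{G^t}+2r_H+18r_{G^j}$ gives about $24$. The fix is to fold the last hop into the geometric sum. Take an anchor $H_i$ with later anchors $H_{i+1},\dots,H_m$ and representative $v\in G^\ell$. Using $r_{H_a}\le 2^{-(a-i)}r_{H_i}$ and $r_{G^\ell}\le r_{H_m}$,
\[
d(c_{H_i},v)\le \sum_{a=i}^{m}2r_{H_a}+r_{H_m}\le 4r_{H_i}.
\]
So a non-promoted member attached under $H_i$ gets $d(c_{G^j},v)\le 6r_{H_i}\le 12r_{G^j}$.

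Now take a failed head with anchors $G^t=A_0,\dots,A_p=H$ that meets the covered $G^j$. Then $d(c_{G^t},c_{G^j})\le\sum_{a=0}^{p}2r_{A_a}\le 4r_{G^t}$ and $r_{G^j}\le r_H\le r_{G^t}$. Hence $d(c_{G^t},v)\le 16r_{G^t}$, and $v$ carries label $s\le t$.

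Your capacity step, by contrast, is correct and is the right way to count. The paper's induction bounds the newly labelled groups at step $t$ by $|\G^t|\le k_t$. That does not follow from running \MGC\ with panel size $\sum_{j\le t}k_j$, which only gives $|\G^t|\le\sum_{j\le t}k_j$. Your injection of the successful chains headed at levels $\le t$ into distinct groups of $\G^t$ gives exactly the needed bound.
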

\begin{proof}
We start by showing that  whenever a group \( G^\ell \in \G^\ell \) is assigned a priority number \( i \), some representative from this group is included in one of the first \( i \) panels. To establish this, we show that for each \( t \), at most \( \sum_{j=1}^t k_j \) groups can have priority number at most \( t \). This implies that for every \( t \), there are at most \( \sum_{j=1}^t k_j \) representatives that must be assigned to the first \( t \) panels—exactly matching the number of available seats across these panels. Hence, in the final phase, the algorithm ensures that the representative from every group in $\G^{\ell}$ is assigned to a panel no later than its priority number.
We prove this by induction on \( t \).
For \( t = 1 \), there are at most \( |\G^1| \leq k_1 \) groups with priority number 1, satisfying the bound. Suppose the claim holds for \( t - 1 \). Then, at step \( t \), there are at most \( |\G^t| \leq k_t \) additional groups that can be assigned priority number \( t \). By the inductive hypothesis, at most \( \sum_{j=1}^{t-1} k_j \) groups have priority number at most \( t - 1 \). Therefore, the total number of groups with priority number at most \( t \) is at most
$\sum_{j=1}^{t-1} k_j + k_t = \sum_{j=1}^t k_j$,
as required.

   It remains to  show that for each $G^t \in \G^t$, with $t\in [\ell]$, there exists a representative that is at most $16 \cdot r_{G^t}$ far away from its center and is assigned a priority label at most $t$.  Note that,  during the second phase, the algorithm marks $G^t$ as covered after either including it in a chain of groups $G^j\rightarrow\ldots \rightarrow G^t \rightarrow  \ldots \rightarrow G^\ell$ or by marked it as covered when $G^t$ is the head of the chain and the algorithm fails to construct a new chain. We analyze the two cases separately.

    \paragraph{Case 1: $G^t$ is included in a chain.}
 
    When $G^t$ is first added to a chain with current anchor $H$, the algorithm checks whether $r_{G^t} \leq \sfrac{r_H}{2}$. If so, $G^t$ becomes the new anchor. Otherwise, the anchor remains $H$. In either case,  we denote the resulting anchor by $H_1$. The algorithm continues to add groups to the chain by performing a shrinking test on the radius. Each time the anchor changes, the radius decreases by at least a factor of $2$. Let 
\(H_1, H_2, \ldots, H_m \) denote the sequence of successive anchors during the execution of the algorithm after adding $G^t$ to the chain. Then, we have
\begin{align}
        d(c_{G^t}, c_{G^\ell}) 
            &\leq d(c_{G^t}, c_{H_1}) + \sum_{i=1}^{\max\{1, m-1\}} d(c_{H_{i}}, c_{H_{i+1}}) + d(c_{H_{m}}, c_{\ell})  \nonumber \\
            &\leq 2\cdot r_{G^t} +  \left( \sum_{i=1}^{\max\{1, m-1\}} 2\cdot r_{H_{i}} \right)  + 2\cdot r_{H_m} \nonumber
            \\
            &\leq 2\cdot r_{G^t} + 2\cdot r_{H_{1}} \cdot \left( \sum_{i=2}^{\max\{1, m-1\}} \frac{1}{2^i}  \right) + 2 \cdot r_{H_1} \nonumber \\
            &\leq 2\cdot r_{G^t} +2\cdot r_{G^t}  \cdot  2  +  2 \cdot r_{G^t}   \leq    8\cdot r_{G^i},  
    \end{align}
where  the first inequality follows by the triangle inequality, the second inequality follows by~\Cref{lem:intersect-close}, the third inequality follows since the radius shrinks by at least a factor of $2$  each time the anchor changes and the fourth inequality follows again by~\Cref{lem:intersect-close}. Moreover, if  $G^\ell$  is the end point of the chain, then $G^\ell$ is assigned a priority number that is at most $t$, as the head of the chain belongs to $\G^j$ with $j\leq t$. This means that when a representative from $G^\ell$ is chosen is assigned a priority number of at most $t$.

    \paragraph{Case 2: $G^t$ is not included in a chain.}
   
    When the algorithm fails to construct the chain, this means that during the process, the algorithm failed to find a group from some set $\G^j$ with $j>t$ that is uncovered and satisfies the above conditions.  This means that there exists a group $G^j \in \G^j$ that has been assigned to a previous chain with end point some $G^\ell\in G^\ell$ and from the above  case we know that    $d(c_{G^j}, c_{G^\ell}) \leq 8 \cdot r_{G^j}$. Moreover with similar arguments as above, we can conclude that $d(c_{G^t}, c_{G^j}) \leq 8 \cdot r_{G^t}$ and $r_{G^j} \leq r_{G^t}$ implying that $d(c_{G^t}, c_{G^\ell}) \leq 16 \cdot r_{G^t}$. Lastly, $G^\ell$ is assigned a priority number at most $t$, because the algorithm 
initiates chains from groups in $\G^r$ with the smallest index $r$ that still contain 
uncovered groups, and at the time this chain was started, $G^t$ was still uncovered.
\end{proof}

\begin{lemma}
    \label{lem:gc-pfc-guarantee}
        Let $V$ be a population and $k$ be a panel size. Consider the collection of groups $\G$ formed by $\MGC$ over $V$ with panel size $k$ and each individual initially assigned to their own group. If $P\subseteq V$ is a panel  such that, for every  $G\in \G$, there exists $p\in P$ with $d(p,c_G)\le \alpha\cdot  r_G$, then $P$ satisfies $(\alpha+3)$-PFC.
\end{lemma}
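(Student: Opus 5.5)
Fix a coalition $S\subseteq V$ with $|S|\ge \sfrac{n}{k}$, and let $r^*=\min_{y\in V}\max_{u\in S}d(u,y)$, attained at some $x\in V$. Then $S\subseteq B(x,r^*)$. The goal is to exhibit $v\in S$ and $p\in P$ with $d(v,p)\le(\alpha+3)r^*$. The approach mirrors the base case of Lemma~\ref{lem:hierarchical} and the argument of Lemma~\ref{lem:intersect-close}. We look at the run of \MGC\ with singleton groups and panel size $k$, and ask what has happened to the members of $S$ by the time the threshold reaches $\delta=r^*$.

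\emph{Step 1 (some member of $S$ is handled early).} Suppose, for contradiction, that no member of $S$ has been captured or disregarded by any group by threshold $r^*$. Then all of $S$ is still uncovered. The ball $B(x,r^*)$ contains at least $\sfrac{n}{k}$ uncovered singletons, so the inner while-loop would form a new group. If it forms a group centered at some point, it captures uncovered points within that radius, and in particular a group containing $x$'s ball contents would be created. Either way, some member of $S$ gets covered by time $r^*$, a contradiction.

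Hence there is $v\in S$ and a group $G\in\G$ that handles $v$ at threshold at most $r^*$. Here ``handles'' means $v$ is either captured into $G$ or disregarded by the expanding ball around $c_G$. In both cases, $r_G\le r^*$ and $d(c_G,v)\le r^*$. The subtle point is the disregarded case: $v$ is covered by the complete ball at $c_G$ at the current threshold $\delta\le r^*$, so $d(c_G,v)\le\delta\le r^*$. Meanwhile $r_G$ is the radius at creation, which is at most $\delta$. So both bounds hold.

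\emph{Step 2 (triangle inequality).} By hypothesis there is $p\in P$ with $d(p,c_G)\le\alpha r_G\le \alpha r^*$. Therefore
\[
d(v,p)\le d(v,c_G)+d(c_G,p)\le r^*+\alpha r^*=(\alpha+1)r^*.
\]
This is already within the claimed $(\alpha+3)r^*$. If the accounting in Step~1 only gives $d(c_G,v)\le 3r^*$, as in the base case of Lemma~\ref{lem:hierarchical}, the bound is still $(\alpha+3)r^*$. That weaker bound comes from passing through the first-handled member $v^*$ and $x$, namely $d(c_G,v^*)+d(v^*,x)+d(x,v)\le 3r^*$.

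\emph{Main obstacle.} The delicate step is Step~1. We must argue carefully about the continuous-time semantics of \MGC: balls around existing group centers disregard points, and new groups form only from uncovered points. The key is to show that by threshold exactly $r^*$, the ball at $x$ cannot remain with $\ge \sfrac{n}{k}$ uncovered points. I will handle ties by noting that the inner while-loop runs to exhaustion at each $\delta$. So at $\delta=r^*$, either some point of $S$ was removed earlier, or a group is formed at this moment that contains a point of $S$. To stay robust, I will settle for the cruder $3r^*$ bound on $d(c_G,v)$, which matches the stated constant $\alpha+3$.
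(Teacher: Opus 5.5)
Your proposal is correct and follows essentially the paper's route: run the greedy-capture process up to threshold $r^*$, use the fact that the inner loop cannot terminate while all of $S$ is still uncovered to get a group $G$ with $r_G\le r^*$ and some $v\in S$ with $d(c_G,v)\le r^*$, and finish with the triangle inequality. Your direct bound $(\alpha+1)r^*$ is already rigorous, because PFC only asks for one member of $S$ (the paper detours through the farthest point of $S$ and gets $\alpha+3$); likewise, your bound $d(c_G,v)\le\delta\le r^*$ for a disregarded $v$ is more accurate than the paper's intermediate claim $d(v,c_G)\le r_G$, which only holds when $v$ is captured at creation, so you do not need to fall back on the cruder $3r^*$ estimate.
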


\begin{proof}
    Let $S \subseteq V$ be an arbitrary set of individuals with $|S| \geq \sfrac{n}{k}$. Let $y^* \in \argmin_{y \in V} \max_{u \in S} d(u,y)$ be the point that minimizes the maximum distance to any point in $S$, and define $r = \max_{u \in S} d(u, y^*)$ as this optimal radius. Let $v^* \in S$ be the farthest point from $y^*$, so $d(y^*, v^*) = r$.
    
     We first claim there exists a group $G \in \G$ with $r_G \leq r$ and some $v \in S$ satisfying $d(v, c_G) \leq r_G$. Indeed, if no such group existed, then when $\MGC$ reaches radius $r$, all points in $S$ would not be disregarded yet. Since $|S| \geq \sfrac{n}{k}$, the greedy algorithm would form a group containing some point from $S$ at a radius of at most $r$, leading to a contradiction.
    
    For this group $G$ with witness $v \in S$, we have
    $$d(y^*, c_G) \leq d(y^*, v) + d(v, c_G) \leq r + r_G \leq 2r,$$
    since $d(y^*, v) \leq r$ by the definition of $y^*$. Thus
    $$d(v^*, c_G) \leq d(v^*, y^*) + d(y^*, c_G) \leq r + 2\cdot r = 3 \cdot r.$$
    
    By assumption, there exists $p \in P$ with $d(c_G, p) \leq \alpha \cdot r_G \leq \alpha \cdot r$. Therefore,
    $$d(v^*, p) \leq d(v^*, c_G) + d(c_G, p) \leq 3\cdot r + \alpha \cdot r = (\alpha + 3)\cdot r.$$
    Since $r = \min_{y \in V} \max_{u \in S} d(u,y)$, we have shown that for $v^* \in S$ and $p \in P$, we have $d(v^*, p) \leq (\alpha + 3) \cdot r$. As $S$ was arbitrary, $P$ satisfies $(\alpha+3)$-PFC.
\end{proof}

\begin{proof}[Proof of~\Cref{thm:constant}]
    We establish both properties of the theorem. For the first property, we see that each prefix panel $P_{\leq t}$ satisfies $19$-PFC by  combining~\Cref{lem:chain-pr-guarantee} and~\Cref{lem:gc-pfc-guarantee}.

    For the second property, we analyze the selection probability for each individual. If an individual appears in a group $G \in \G^\ell$, then it appears in some panel with probability 
    $\sfrac{\sum_{t=1}^{\ell}k_t}{n}$
     since each such group  $\G^\ell$ contains exactly $\sfrac{n}{\sum_{t=1}^{\ell}k_t}$ individuals, and one of them is chosen uniformly at random.  For individuals not covered by groups in $\G^\ell$, the number of such individuals equals $\sfrac{n}{\sum_{t=1}^{\ell}k_t}$ times the number of empty panel seats,  and they are sampled with equal probability to fill the remaining positions. Hence, each individual appears in some panel with probability $\sfrac{\sum_{t=1}^{\ell}k_t}{n}$.\qedhere
    
\end{proof}

It remains an open question whether one can design a distribution over a sequence of panels such that every prefix satisfies a constant-factor approximation of the proportionality notion PRF, which, in contrast to 
PFC, requires that each subset of individuals receive not just one representative, but a number of representatives proportional to its size.

\begin{openq}
    Does there exist a  distribution over a sequence of panels, $P_1, \ldots, P_{\ell}$ such that each prefix panel $P_{\leq t}$  satisfies $O(1)$-PRF? 
\end{openq}

\section{Conclusion}
Permanent citizens' assemblies represent a promising democratic innovation by ensuring that minority voices can persist over time, even when too small to warrant representation in any single panel. We formalize this challenge as a problem of \emph{temporal sortition} in metric spaces and make progress on it by proposing three algorithms, each providing different guarantees  for panel-level  and  prefix-level  representation.

Beyond the questions highlighted above, several intriguing directions remain open. For instance, can we ensure representation not just for prefix panels, but for \emph{any} consecutive subsequence of panels? Additionally, is it possible to achieve meaningful guarantees without knowing in advance the total number or sizes of panels?

\bibliographystyle{plainnat}
\bibliography{abb,sample}
\end{document}